\newcommand{\reals}{{\mathbb R}}
\newcommand{\ints}{{\mathbb Z}}
\newcommand{\vor}{\operatorname{Vor}}
\newcommand{\relevant}{\operatorname{Rel}}
\newcommand{\term}{\emph}
\newcommand{\abs}[1]{{\left| #1 \right|}}
\newcommand{\floor}[1]{{\left\lfloor #1 \right\rfloor}}
\newcommand{\ceil}[1]{\lceil #1 \rceil}
\newcommand{\dotprod}[2]{ #1 \cdot #2}
\title{Finding a closest point in a lattice of Voronoi's first kind}
\author{Robby~G.~McKilliam, Alex Grant and I. Vaughan L. Clarkson}
\begin{document}

% make the title area 
\maketitle

 \begin{abstract} 
We show that for those lattices of Voronoi's first kind with known obtuse superbasis, a closest lattice point can be computed in $O(n^4)$ operations where $n$ is the dimension of the lattice.  To achieve this a series of relevant lattice vectors that converges to a closest lattice point is found.  We show that the series converges after at most $n$ terms.  Each vector in the series can be efficiently computed in $O(n^3)$ operations using an algorithm to compute a minimum cut in an undirected flow network.  %We discuss potential applications of this algorithm as suggest some future research questions.
\end{abstract}

%\begin{IEEEkeywords}
\begin{keywords}
Lattices, closest point algorithm, closest vector problem.
\end{keywords}
%\end{IEEEkeywords}

% The paper headers
\pagestyle{myheadings}
\thispagestyle{plain} 
\markboth{Finding a closest point in a lattice of Voronoi's first kind}{DRAFT \today}

\section{Introduction}\label{sec:introduction}

An $n$-dimensional \term{lattice} $\Lambda$ is a discrete set of vectors from $\reals^m$, $m \geq n$, formed by the integer linear combinations of a set of linearly independent basis vectors $b_1, \dots, b_n$ from $\reals^m$~\cite{SPLAG}.  That is, $\Lambda$ consists of all those vectors, or \emph{lattice points}, $x \in \reals^m$ satisfying
\[
  x = b_1 u_1 + b_2u_2 + \dots + b_n u_n \qquad u_1, \dots , u_n \in \ints. 
\] 
%In this paper vectors are assumed to be column vectors unless otherwise stated.
Given a lattice $\Lambda$ in $\reals^m$ and a vector $y \in \reals^m$, a problem of interest is to find a lattice point $x \in \Lambda$ such that the squared Euclidean norm
\[
\| y - x \|^2 = \sum_{i=1}^m (y_i - x_i)^2
\] 
is minimised.  This is called the \emph{closest lattice point problem} (or \emph{closest vector problem}) and a solution is called a \emph{closest lattice point} (or simply \emph{closest point}) to $y$. %Equivalently we wish to minimise
%\[
%\| y - \sum_{i=1}^n b_i w_i \|^2
%\]
%over integers $w_1,\dots,w_n$.  
A related problem is to find a lattice point of minimum nonzero Euclidean length, that is, a lattice point of length
\[
\min_{x\in \Lambda \backslash \{ \zerobf \} } \| x \|^2,
\]
where $\Lambda \backslash  \{\zerobf\}$ denotes the set of lattice points not equal to the origin $\zerobf$.  This is called the \emph{shortest vector problem}.

The closest lattice point problem and the shortest vector problem have interested mathematicians and computer scientists due to their relationship with integer programming~\cite{Lenstra_integerprogramming1983,Kannan1987_fast_general_np,Babai1986}, the factoring of polynomials~\cite{Lenstra1982}, and cryptanalysis~\cite{Joux_toolbox_cryptanal1998,NyguyenStern_two_faces_crypto,Micciancio_lattice_based_post_quantum_crypto}.  
Solutions of the closest lattice point problem have engineering applications.  For example, if a lattice is used as a vector quantiser then the closest lattice point corresponds to the minimum distortion point~\cite{Conway1983VoronoiCodes,Conway1982VoronoiRegions,Conway1982FastQuantDec}.  If the lattice is used as a code, then the closest lattice point corresponds to what is called \emph{lattice decoding}\index{lattice decoding} and has been shown to yield arbitrarily good codes~\cite{Erex2004_lattice_decoding,Erez2005}.  The closest lattice point problem also occurs in communications systems involving multiple antennas~\cite{Ryan2008,Wubben_2011}.  The unwrapping of phase data for location estimation can also be posed as a closest lattice point problem and this has been applied to the global positioning system~\cite{Teunissen_GPS_1995,Hassibi_GPS_1998}.  The problem has also found applications to circular statistics~\cite{McKilliam_mean_dir_est_sq_arc_length2010}, single frequency estimation~\cite{McKilliamFrequencyEstimationByPhaseUnwrapping2009}, and related signal processing problems~\cite{McKilliam2007,Clarkson2007,McKilliam2009IndentifiabliltyAliasingPolyphase,Quinn_sparse_noisy_SSP_2012}.

The closest lattice point problem is known to be NP-hard under certain conditions when the lattice itself, or rather a basis thereof, is considered as an additional input parameter~\cite{micciancio_hardness_2001, Dinur2003_approximating_CVP_NP_hard, Jalden2005_sphere_decoding_complexity}. Nevertheless, algorithms exist that can compute a closest lattice point in reasonable time if the dimension is small~\cite{Pohst_sphere_decoder_1981,Kannan1987_fast_general_np,Agrell2002}.  These algorithms all require a number of operations that grows as $O(n^{O(n)})$ where $n$ is the dimension of the lattice.  Recently, Micciancio~\cite{Micciancio09adeterministic} described a solution for the closest lattice point problem that requires a number of operations that grows as $O(2^{2n})$.  This single exponential growth in complexity is the best known. 

Although the problem is NP-hard in general, fast algorithms are known for specific highly regular lattices, such as the root lattices $A_n$ and $D_n$, their dual lattices $A_n^*$ and $D_n^*$, the integer lattice $\ints^{n}$, and the Leech lattice~\cite[Chap. 4]{SPLAG}\cite{Conway1982FastQuantDec,Conway1986SoftDecLeechGolay, Clarkson1999:Anstar, McKilliam2008, McKilliam2008b, McKilliam2009CoxeterLattices,Vardy1993_leech_lattice_MLD}.  In this paper we consider a particular class of lattices, those of \emph{Voronoi's first kind}~\cite{ConwaySloane1992_voronoi_lattice_3d_obtuse_superbases,Valentin2003_coverings_tilings_low_dimension,Voronoi1908_main_paper}.  Each lattice of Voronoi's first kind has what is called an \emph{obtuse superbasis}.  We show that if the obtuse superbasis is known, then a closest lattice point can be computed in $O(n^4)$ operations.  This is achieved by enumerating a series of \emph{relevant vectors} of the lattice.  Each relevant vector in the series can be computed in $O(n^3)$ operations using an algorithm for computing a minimum cut in an undirected flow network~\cite{Picard_min_cuts_1974,Sankaran_solving_CDMA_mincut_1998,Ulukus_cdma_mincut_1998,Cormen2001}.  We show that the series converges to a closest lattice point after at most $n$ terms, resulting in $O(n^4)$ operations in total.  This result extends upon a recent result by some of the authors showing that a short vector in a lattice of Voronoi's first kind can be found by computing a minimum cut in a weighted graph~\cite{McKilliam_short_vectors_first_type_isit_2012}.

The paper is structured as follows.  Section~\ref{sec:voron-cells-relev} describes the relevant vectors and the \emph{Voronoi cell} of a lattice. Section~\ref{sec:iterative-slicer} describes a procedure to find a closest lattice point by enumerating a series of relevant vectors.  The series is guaranteed to converge to a closest lattice point after a finite number of terms.  In general the procedure might be computationally expensive because the number of terms required might be large and because computation of each relevant vector in the series might be expensive.  Section~\ref{sec:latt-voron-first} describes lattices of Voronoi's first kind and their obtuse superbasis.  In Section~\ref{sec:seri-relev-vect} it is shown that for these lattices, the series of relevant vectors results in a closest lattice point after at most $n$ terms.  Section~\ref{sec:comp-clos-relev} shows that each relevant vector in the series can be computed in $O(n^3)$ operations by computing a minimum cut in an undirected flow network.  Section~\ref{sec:discussion} discusses some potential applications of this algorithm and poses some interesting questions for future research.

\section{Voronoi cells and relevant vectors}\label{sec:voron-cells-relev}
\newcommand{\calR}{\mathcal{R}}
The (closed) \term{Voronoi cell}, denoted $\vor(\Lambda)$, of a lattice $\Lambda$ in $\reals^m$ is the subset of $\reals^m$ containing all points closer or of equal distance (here with respect to the Euclidean norm) to the lattice point at the origin than to any other lattice point. The Voronoi cell is an $m$-dimensional convex polytope that is symmetric about the origin. %Here we will always assume the Euclidean norm (or 2-norm), so $\vor(\Lambda)$ contains those points nearer in Euclidean distance to the origin. %If $x \in \Lambda$ it follows that $\vor(\Lambda) + x$ is the subset of $\reals^n$ that is nearer to $x$ than any other lattice point in $\Lambda$. Figure \ref{lattices:fig:vorregion} is an example of the Voronoi cell.
 
Equivalently the Voronoi cell can be defined as the intersection of the half spaces 
\begin{align*}
H_{v} &= \{x \in \reals^n \mid \|x\| \leq \|x - v\| \} \\
&= \{x \in \reals^n \mid \dotprod{x}{v} \leq \tfrac{1}{2}\dotprod{v}{v} \}
\end{align*}
for all $v \in \Lambda \backslash  \{\zerobf\}$.  %We denote by $x^\prime$ the transpose of the vector $x$ and so $\dotprod{x}{v}$ is the inner product between column vectors $x$ and $v$.  
We denote by $\dotprod{x}{v}$ the inner product between vectors $x$ and $v$.
It is not necessary to consider all $v \in \Lambda \backslash  \{\zerobf\}$ to define the Voronoi cell.   %Figure \ref{lattices:fig:vorregion} depicts the relevant vectors of the lattice with generator given by~\eqref{lattices:eq:genB}.  A lattice point in $\calR$ is called \term{relevant}. The following remark follow trivially from the preceding discussion.
The \emph{relevant vectors} are those lattice points $v \in \Lambda \backslash  \{\zerobf\}$ for which  
\[
\dotprod{v}{x} \leq \dotprod{x}{x} \qquad \text{for all $x \in \Lambda$}.
\]
We denote by $\relevant(\Lambda)$ the set of relevant vectors of the lattice $\Lambda$.  The Voronoi cell is the intersection of the halfspaces corresponding with the relevant vectors, that is, 
\[
%\begin{equation}\label{eq:defvortermsrelvecs}
\vor(\Lambda) = \cap_{v\in\relevant(\Lambda)}{H_{v}}.
%\end{equation}
\]
The closest lattice point problem and the Voronoi cell are related in that $x\in\Lambda$ is a closest lattice point to $y$ if and only $y - x \in \vor(\Lambda)$, that is, if and only if
\begin{equation}\label{eq:relvectnearpointieq}
\dotprod{(y - x)}{v} \leq \tfrac{1}{2} \dotprod{v}{v} \qquad \text{for all $v \in \relevant(\Lambda)$}.  
\end{equation} 

% Every short vector in a lattice is also a relevant vector, because, if a lattice point $s$ is not relevant there exists a lattice point $x$ not equal to $s$ or $\zerobf$ such that $\dotprod{x}{s} \geq \dotprod{x}{x}$ or equivalently $\dotprod{\tfrac{x}{\|x\|}}{s} \geq \|x\|$ after dividing by $\|x\|$.  Since both $\tfrac{x}{\|x\|}$ and $\tfrac{s}{\|s\|}$ are unit vectors and $s \neq x$ then,
% \[
% \|s\| = \dotprod{\tfrac{s}{\|s\|}}{s} >  \dotprod{\tfrac{x}{\|x\|}}{s} \geq \|x\|,
% \]
% i.e, $\|s\| > \|x\|$, so $s$ is not a short vector.  

If $s$ is a short vector in a lattice $\Lambda$ then 
\[
\rho = \frac{\|s\|}{2} = \frac{1}{2} \min_{x \in \Lambda / \{\zerobf\} } \|x\|
\]
is called the \emph{packing radius} (or \emph{inradius}) of $\Lambda$~\cite{SPLAG}.  The packing radius is the minimum distance between the boundary of the Voronoi cell and the origin.  It is also the radius of the largest sphere that can be placed at every lattice point such that no two spheres intersect (see Figure~\ref{lattices:fig:vorregion}).  The following well known results will be useful.

%  \begin{proposition}\label{prop:latticepointinscaledtranslatedvorcell}
%  Let $\Lambda \subset \reals^{m}$ be an $n$-dimensional lattice with Voronoi cell $\vor(\Lambda)$.  Let $D$ be a positive real number and let $\ceil{D}$ denote the smallest integer larger than $D$.  Denote by 
%  \[
%   D\vor(\Lambda) + t = \left\{ x \in \reals^{m} \ \mid \frac{x-t}{D} \in \vor(\Lambda) \right\}
%  \]
% the Voronoi cell scaled by $D$ and translated by $t \in \reals^{m}$.  The number of lattice points from $\Lambda$ in $ D\vor(\Lambda) + t$ is less than or equal to $\ceil{D}^{n}$. 
%  \end{proposition}
%  \begin{proof}
% See, for example,~\cite[Lemma~3.7]{Micciancio09adeterministic}.  The result also follows from~\cite[vol.~134, p.~277]{Voronoi1908_main_paper} or \cite[Theorem~2]{ConwaySloane1992_voronoi_lattice_3d_obtuse_superbases}.
%  \end{proof}

\begin{proposition}\label{eq:latticepointsinvorcvell}
Let $\Lambda \subset \reals^{m}$ be an $n$-dimensional lattice.  For $r\in \reals$ let $\ceil{r}$ denote the smallest integer strictly larger than $r$.  Let $t \in \reals^m$.  The number of lattice points inside the scaled and translated Voronoi cell $r \vor(\Lambda) + t$ is at most $\ceil{r}^n$.
\end{proposition}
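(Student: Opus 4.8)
The plan is to bound the number of lattice points in $r\vor(\Lambda)+t$ by exhibiting a partition of this region into translates of a copy of the Voronoi cell scaled down by a factor $\ceil{r}$, and then observing that each such piece can contain at most one lattice point. Concretely, set $k = \ceil{r}$, so $k$ is a positive integer with $r < k$. The key geometric fact I would use is that the scaled Voronoi cell $k\vor(\Lambda)$ tiles $\reals^m$ under translation by the sublattice $k\Lambda$: since $\vor(\Lambda)$ tiles $\reals^m$ by translations from $\Lambda$, scaling everything by $k$ shows $k\vor(\Lambda)$ tiles by translations from $k\Lambda$. Equivalently, $\reals^m = \bigcup_{w \in \Lambda} \big(\tfrac{1}{k}\,k\vor(\Lambda) + w\big)$... more usefully, every point of $\reals^m$ lies in $\vor(\Lambda) + w$ for some $w \in \Lambda$, and I want the analogous statement at scale: the translates $k\vor(\Lambda) + kw$, $w \in \Lambda$, tile $\reals^m$ (overlapping only on boundaries).

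Here is the counting step. Suppose $x_1, x_2 \in \Lambda$ are two distinct lattice points lying in $r\vor(\Lambda) + t \subseteq k\vor(\Lambda) + t$ (using $r \le k$ and that the Voronoi cell is convex and contains the origin, so $r\vor(\Lambda) \subseteq k\vor(\Lambda)$). I claim $x_1$ and $x_2$ cannot be congruent modulo $k\Lambda$: if $x_1 - x_2 = kw$ for some nonzero $w \in \Lambda$, then since $x_1, x_2 \in k\vor(\Lambda) + t$ we would have $x_1 - t$ and $x_2 - t$ both in $k\vor(\Lambda)$, i.e.\ $\tfrac{1}{k}(x_1 - t)$ and $\tfrac{1}{k}(x_2 - t)$ both in $\vor(\Lambda)$, and their difference is $w \in \Lambda \backslash\{\zerobf\}$; but two points of $\vor(\Lambda)$ differing by a nonzero lattice vector must both lie on its boundary, and moreover strict interior points are impossible — this needs a small argument that not both can be interior, combined with the fact that $x_1 \ne x_2$ forces $w \ne \zerobf$. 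To make this airtight I would instead argue directly: the map $x \mapsto x \bmod k\Lambda$ is injective on the lattice points inside any translate of $k\vor(\Lambda)$, because $k\vor(\Lambda)$ is a fundamental domain for $k\Lambda$ (up to boundary), and one checks that a translate of a fundamental domain that is a convex polytope symmetric about its centre contains at most one representative of each coset — handling the boundary by the standard perturbation/symmetry trick used for the Voronoi cell itself. Since $[\Lambda : k\Lambda] = k^n$, the number of cosets is $k^n$, so at most $k^n = \ceil{r}^n$ lattice points lie in $r\vor(\Lambda)+t$.

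The main obstacle is the boundary handling: translates of the Voronoi cell overlap on their boundaries, so two lattice points in different cosets are clearly separated, but I must rule out the degenerate case where a translate $k\vor(\Lambda)+t$ catches two lattice points from the \emph{same} coset both sitting on the shared boundary. This is exactly the subtlety that makes the Voronoi cell a fundamental domain only "up to measure zero." The clean way around it is to use that $r < k$ strictly (which is why the statement uses the strict ceiling $\ceil{r}$, the smallest integer \emph{strictly} larger than $r$): then $r\vor(\Lambda)$ is contained in the \emph{interior} of $k\vor(\Lambda)$, so $r\vor(\Lambda)+t$ lies in the interior of a single translate $k\vor(\Lambda)+t$, and interiors of the tiling translates are disjoint, giving genuine injectivity of $x \mapsto x \bmod k\Lambda$ on $(r\vor(\Lambda)+t)\cap\Lambda$ without any boundary fuss. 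I would present the argument in this order: (i) reduce to $k\vor(\Lambda)$ and note strict containment in its interior; (ii) recall $k\vor(\Lambda)$ tiles $\reals^m$ by $k\Lambda$ with disjoint interiors; (iii) deduce injectivity modulo $k\Lambda$; (iv) conclude via $[\Lambda:k\Lambda]=k^n$.
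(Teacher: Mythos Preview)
Your argument is correct and follows essentially the same route as the paper's: both proofs reduce to the fact that $[\Lambda:k\Lambda]=k^n$ with $k=\ceil{r}$, and bound the number of lattice points in $r\vor(\Lambda)+t$ by showing that the reduction map $\Lambda\to\Lambda/k\Lambda$ is injective on that set. The only difference is in how the boundary is handled. The paper replaces $\vor(\Lambda)$ by a half-open fundamental domain $V$ (interior plus one face from each antipodal pair), so that $kV+t$ contains exactly one representative of each coset, and then uses $r\vor(\Lambda)\subset kV$. You instead exploit the strict inequality $r<\ceil{r}$ directly: $r\vor(\Lambda)$ lies in the \emph{interior} of $k\vor(\Lambda)$, and since interiors of the translates $k\vor(\Lambda)+kw$ are pairwise disjoint, two lattice points in that interior congruent modulo $k\Lambda$ must coincide. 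Your route is arguably cleaner here because it avoids defining the half-open cell and makes transparent why the statement needs the strict ceiling; the paper's device, on the other hand, is a reusable tool that gives an exact count of $k^n$ lattice points in $kV+t$ regardless of how one arrived at $k$.
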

\begin{proof}
It is convenient to modify the boundary of the Voronoi cell so that it tessellates $\reals^m$ under translations by $\Lambda$.  With this aim we let $V \subset \vor(\Lambda)$ contain all those points from the interior of $\vor(\Lambda)$, but have opposing faces open and closed.  That is, if $x \in V$ is on the boundary of $V$ then $-x \notin V$.  With this definition it can be asserted that $V$ tessellates $\reals^m$ under translations by $\Lambda$.  That is, $\reals^{m} = \cup_{x \in \Lambda}(V + x)$ and the intersection $(V + x)\cap(V+y)$ is empty for distinct lattice points $x$ and $y$.  Now, for positive integer $k$, the scaled and translated cell $kV + t$ contains precisely one coset representative for each element of the quotient group $\Lambda/k\Lambda$~\cite[Sec.~2.4]{McKilliam2010thesis}.  There are $k^n$ coset representatives.  Thus, the number of lattice points inside $r \vor(\Lambda) + t \subset \ceil{r}V + t$ is at most $\ceil{r}^n$.
\end{proof}

\begin{proposition}\label{eq:latticepointsinsphere}
Let $\Lambda \subset \reals^{m}$ be an $n$-dimensional lattice with packing radius $\rho$.  Let $S$ be an $m$-dimensional hypersphere of radius $r$ centered at $t \in \reals^{m}$.  The number of lattice points from $\Lambda$ in the sphere $S$ is at most $\ceil{r/\rho}^{n}$. %where $\ceil{\cdot}$ denotes the smallest integer strictly larger than its argument. 
 \end{proposition}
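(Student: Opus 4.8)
The plan is to deduce this directly from Proposition~\ref{eq:latticepointsinvorcvell} by observing that a Euclidean ball is contained in an appropriately scaled Voronoi cell. First I would invoke the characterisation of the packing radius noted just above the proposition statement: $\rho$ is the minimum distance between the boundary of $\vor(\Lambda)$ and the origin, which says precisely that the closed ball of radius $\rho$ centred at the origin is contained in $\vor(\Lambda)$.

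Next, since $\vor(\Lambda)$ is convex and symmetric about the origin, I would scale this inclusion by the positive factor $r/\rho$ to obtain that the closed ball of radius $r$ about the origin is contained in $\frac{r}{\rho}\vor(\Lambda)$, and then translate by $t$ to get that the closed ball of radius $r$ centred at $t$ is contained in $\frac{r}{\rho}\vor(\Lambda) + t$. The hypersphere $S$ lies inside this closed ball, so every lattice point in $S$ is a lattice point in $\frac{r}{\rho}\vor(\Lambda) + t$.

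Finally I would apply Proposition~\ref{eq:latticepointsinvorcvell} with scale factor $r/\rho$ and translation $t$: the number of lattice points in $\frac{r}{\rho}\vor(\Lambda) + t$ is at most $\ceil{r/\rho}^{n}$, and hence the same bound holds for the number of lattice points in $S$. The argument is short and I do not anticipate a genuine obstacle; the only points that need care are the inradius characterisation of $\rho$ and the convention (inherited from Proposition~\ref{eq:latticepointsinvorcvell}) that $\ceil{\cdot}$ denotes the smallest integer strictly larger than its argument, so that $\ceil{r/\rho}$ is a well-defined positive integer to which that proposition applies.
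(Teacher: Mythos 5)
Your proposal is correct and follows essentially the same route as the paper: both use the inradius characterisation of $\rho$ to conclude $S \subset \frac{r}{\rho}\vor(\Lambda) + t$ and then invoke Proposition~\ref{eq:latticepointsinvorcvell} with scale $r/\rho$ and translation $t$. Your write-up just spells out the scaling-and-translating of the inclusion a bit more explicitly than the paper does.
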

 \begin{proof}
The packing radius $\rho$ is the Euclidean length of a point on the boundary of the Voronoi cell $\vor(\Lambda)$ that is closest to the origin. Therefore, the sphere $S$ is a subset of $\vor(\Lambda)$ scaled by $r/\rho$ and translated by $t$.  That is, $S \subset r/\rho \vor(\Lambda) + t$.  The proof follows because the number of lattice points in $r/\rho \vor(\Lambda) + t$ is at most $\ceil{r/\rho}^n$ by Proposition~\ref{eq:latticepointsinvorcvell}.
\end{proof}

% \footnote{These are the `strict' relevant vectors according to Conway and Sloane~\cite{ConwaySloane1992_voronoi_lattice_3d_obtuse_superbases}.  If the inequality $\dotprod{v}{x} < \dotprod{x}{x}$ is replaced by $\dotprod{v}{x} \leq \dotprod{x}{x}$ then this would also include the `lax' relevant vectors.  The short vectors are always strict so we only have use of the strict relevant vectors here.}
 
\begin{figure}[tp] 
	\centering      
		\includegraphics{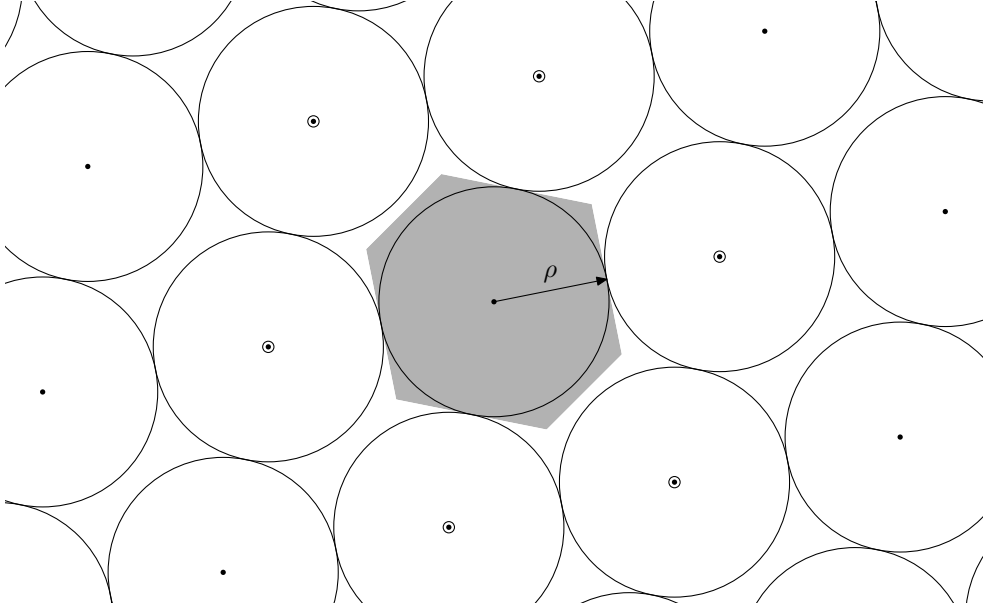} 
		\caption{The $2$-dimensional lattice with basis vectors $(3,0.6)$ and $(0.6,3)$.  The lattice points are represented by dots and the relevant vectors are circled.  The Voronoi cell $\vor(\Lambda)$ is the shaded region and the packing radius $\rho$ and corresponding sphere packing (circles) are depicted.
}     
		\label{lattices:fig:vorregion}   
\end{figure}

\section{Finding a closest lattice point by a series of relevant vectors} \label{sec:iterative-slicer}

Let $\Lambda$ be a lattice in $\reals^m$ and let $y \in \reals^m$. A simple method to compute a lattice point $x \in \Lambda$ closest to $y$ is as follows.  Let $x_0$ be some lattice point from $\Lambda$, for example the origin.  Consider the following iteration,
\begin{align}
x_{k+1} &= x_k + v_k \nonumber \\
v_k &= \arg\min_{ v \in \relevant(\Lambda) \cup \{\zerobf\} } \|y - x_k - v \|, \label{eq:relvectsminimslicer}
\end{align} 
where $\relevant(\Lambda) \cup \{\zerobf\}$ is the set of relevant vectors of $\Lambda$ including the origin.  The minimum over $\relevant(\Lambda) \cup \{\zerobf\}$ may not be unique, that is, there may be multiple vectors from $\relevant(\Lambda) \cup \{\zerobf\}$ that are closest to $y - x_k$.  In this case, any one of the minimisers may be chosen.  The results that we will describe do not depend on this choice. We make the following straightforward propositions.

\begin{proposition}\label{obs:1}
%After a finite number, say $K$ iterations, the above proceedure The above proceedure reaches a stationary point after a finite number $K$ of iterations
At the $k$th iteration either $x_k$ is a closest lattice point to $y$ or $\|y - x_k\| > \| y - x_{k+1} \|$.
\end{proposition}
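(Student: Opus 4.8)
The plan is to establish the dichotomy directly by contraposition: assume $x_k$ is not a closest lattice point to $y$ and deduce the strict decrease $\|y - x_k\| > \|y - x_{k+1}\|$. The bridge to the relevant vectors is the characterisation in~\eqref{eq:relvectnearpointieq}, which says that $x_k$ is a closest lattice point to $y$ precisely when $\dotprod{(y - x_k)}{v} \leq \tfrac{1}{2}\dotprod{v}{v}$ for every $v \in \relevant(\Lambda)$. Negating this, if $x_k$ is not a closest lattice point then there exists at least one relevant vector $v \in \relevant(\Lambda)$ with $\dotprod{(y - x_k)}{v} > \tfrac{1}{2}\dotprod{v}{v}$.

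Next I would show that this particular witness $v$ already strictly improves on $x_k$. Expanding the squared norm gives
\[
\|y - x_k - v\|^2 = \|y - x_k\|^2 - 2\dotprod{(y - x_k)}{v} + \dotprod{v}{v},
\]
and substituting the strict inequality $\dotprod{(y - x_k)}{v} > \tfrac{1}{2}\dotprod{v}{v}$ yields $\|y - x_k - v\|^2 < \|y - x_k\|^2$. Since $v_k$ is defined in~\eqref{eq:relvectsminimslicer} as a minimiser of $\|y - x_k - v\|$ over $v \in \relevant(\Lambda) \cup \{\zerobf\}$, and the witness $v$ lies in this set, we obtain $\|y - x_{k+1}\| = \|y - x_k - v_k\| \leq \|y - x_k - v\| < \|y - x_k\|$, which is the required strict inequality.

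I do not expect any real obstacle; the argument is a one-line application of~\eqref{eq:relvectnearpointieq} together with the expansion of the squared norm. The only points that merit care are reading the negation of the closest-point condition in the correct (strict) direction, so that the improving half-space inequality is strict, and observing that because $\zerobf$ is always among the candidates in~\eqref{eq:relvectsminimslicer} the sequence $\|y - x_k\|$ is in any case non-increasing, with the strict drop occurring exactly when such a witness $v$ exists, that is, exactly when $x_k$ is not yet a closest lattice point.
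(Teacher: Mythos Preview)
Your proof is correct and follows essentially the same route as the paper's own proof: negate the Voronoi-cell characterisation~\eqref{eq:relvectnearpointieq} to obtain a relevant vector $v$ with $\dotprod{(y-x_k)}{v} > \tfrac{1}{2}\dotprod{v}{v}$, expand $\|y-x_k-v\|^2$, and then use that $v_k$ is a minimiser over $\relevant(\Lambda)\cup\{\zerobf\}$. The only cosmetic difference is that the paper states the trivial direction (if $x_k$ is closest then $\|y-x_k\|\leq\|y-x_{k+1}\|$) explicitly before treating the contrapositive, whereas you fold everything into the contrapositive from the outset.
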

\begin{proof}
If $x_k$ is a closest lattice point to $y$ then $\|y - x_k\| \leq \| y - x_{k+1} \|$ by definition.  On the other hand if $x_k$ is not a closest lattice point to $y$ we have $y - x_k \notin \vor(\Lambda)$ and from~\eqref{eq:relvectnearpointieq} there exists a relevant vector $v$ such that
\[
0 > \dotprod{v}{v} - 2\dotprod{(y - x_k)}{v}.
\]
Adding $\|y - x_k\|^2$ to both sides of this inequality gives
\begin{align*}
\|y - x_k\|^2 &> \dotprod{v}{v} - 2\dotprod{(y - x_k)}{v} + \|y - x_k\|^2 \\
&= \|y - x_k - v\|^2 \\
&\geq \arg\min_{ v \in \relevant(\Lambda) \cup \{\zerobf\}}\|y - x_k - v \|^2 \\
&= \|y - x_k - v_k \|^2 \\
&= \|y - x_{k+1}\|^2. 
\end{align*}
\end{proof} 

% \begin{observation}\label{obs:1}
% At the $k$th iteration either $x_k$ is a closest lattice point to $y$ or 
% \[
%  \max_{ v \in \relevant(\Lambda)}\frac{\dotprod{(y - x_k)}{v}}{\dotprod{v}{v}} > \frac{1}{2}.
% \]
% \end{observation}
% \begin{proof}
% If $x_k$ is a closest lattice point to $y$ then $\dotprod{(y - x)}{v} \leq \tfrac{1}{2} \dotprod{v}{v}$ for all $v \in \relevant(\Lambda)$ by definition of the Voronoi cell~\eqref{eq:relvecsvorcellnearpoint} and so
% \[
%  \max_{ v \in \relevant(\Lambda)}\frac{\dotprod{(y - x_k)}{v}}{\dotprod{v}{v}} \leq \frac{1}{2}.
% \]
% Conversely if $x_k$ is not the closest point to $y$ there exists a relevant vector $r$ such that $\dotprod{(y - x)}{r} > \tfrac{1}{2} \dotprod{r}{r}$ and so
% \[
%  \max_{ v \in \relevant(\Lambda)}\frac{\dotprod{(y - x_k)}{v}}{\dotprod{v}{v}} \geq \frac{\dotprod{(y - x_k)}{r}}{\dotprod{r}{r}} > \frac{1}{2}.
% \]
% \end{proof}

 \begin{proposition}\label{obs:2}
 There is a finite number $K$ such that $x_K, x_{K+1}, x_{K+2}, \dots$ are all closest points to $y$.
 \end{proposition}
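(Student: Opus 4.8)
The plan is to exploit the monotonicity supplied by Proposition~\ref{obs:1} together with the fact, recorded in Proposition~\ref{eq:latticepointsinsphere}, that a lattice contains only finitely many points inside any bounded region. First I would note that Proposition~\ref{obs:1} makes the sequence $\|y - x_0\| \geq \|y - x_1\| \geq \|y - x_2\| \geq \cdots$ non-increasing, and moreover strictly decreasing at step $k$ whenever $x_k$ is not a closest point to $y$. In particular every iterate $x_k$ lies in the closed ball $S$ of radius $\|y - x_0\|$ centred at $y$.

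Next I would argue by contradiction: suppose no $x_k$ is a closest point to $y$. Then the distances $\|y - x_k\|$ are strictly decreasing, so the lattice points $x_0, x_1, x_2, \dots$ are pairwise distinct. But all of them lie in $S$, and by Proposition~\ref{eq:latticepointsinsphere} the number of lattice points of $\Lambda$ in $S$ is at most $\ceil{\|y - x_0\|/\rho}^n$, which is finite. This contradiction shows that there is a least index $K$ for which $x_K$ is a closest point to $y$.

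It remains to check that being a closest point is ``absorbing'', i.e.\ that if $x_k$ is a closest point to $y$ then so is $x_{k+1}$; the conclusion then follows by induction from the existence of $K$. If $x_k$ is a closest point then $y - x_k \in \vor(\Lambda)$, so by~\eqref{eq:relvectnearpointieq} we have $\|y - x_k - v\| \geq \|y - x_k\|$ for every $v \in \relevant(\Lambda)$, with equality at $v = \zerobf$. Hence the minimum in~\eqref{eq:relvectsminimslicer} is exactly $\|y - x_k\|$, and whichever minimiser $v_k$ happens to be selected we obtain $\|y - x_{k+1}\| = \|y - x_k\|$, which is the smallest possible distance from $y$ to $\Lambda$; thus $x_{k+1}$ is again a closest point. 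The step I expect to require the most care is precisely this last one: since the minimiser $v_k$ in~\eqref{eq:relvectsminimslicer} need not be unique, one must rule out the iteration drifting off a closest point by stepping along a face of the Voronoi cell, and the membership $y - x_k \in \vor(\Lambda)$ is exactly what prevents this. The finiteness argument, by contrast, is routine once Proposition~\ref{eq:latticepointsinsphere} is in hand.
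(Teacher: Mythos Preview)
Your argument is correct and follows the same route as the paper: assume the conclusion fails, use Proposition~\ref{obs:1} to get a strictly decreasing chain of distances, and contradict Proposition~\ref{eq:latticepointsinsphere}. The one place where you are actually more careful than the paper is the ``absorbing'' step. The paper passes directly from ``no such finite $K$ exists'' to the strict chain $\|y-x_0\|>\|y-x_1\|>\cdots$, which tacitly uses that once some $x_k$ is closest every later iterate is closest; you prove this explicitly via~\eqref{eq:relvectnearpointieq}. Note that a shorter justification is available: since $\zerobf\in\relevant(\Lambda)\cup\{\zerobf\}$ one always has $\|y-x_{k+1}\|\le\|y-x_k\|$, and if $x_k$ is closest the reverse inequality is immediate, so $\|y-x_{k+1}\|=\|y-x_k\|$ and $x_{k+1}$ is closest as well.
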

 \begin{proof}
Suppose no such finite $K$ exists, then
\[
\|y - x_0\| >  \|y - x_1\| > \|y - x_2\| > \dots
\]
and so $x_0,x_1,\dots$ is an infinite sequence of distinct (due to the strict inequality) lattice points  all contained inside an $n$-dimensional hypersphere of radius $r = \|y - x_0\|$ centered at $y$.  This is a contradiction because, if $\rho$ is the packing radius of the lattice, then less than $\ceil{r/\rho}^n$ lattice points lie inside this sphere by Proposition~\ref{eq:latticepointsinsphere}. 
\end{proof}
 
Proposition~\ref{obs:2} above asserts that after some finite number $K$ of iterations the procedure arrives at $x_K$, a closest lattice point to $y$.  Using Proposition~\ref{obs:1} we can detect that $x_K$ is a closest lattice point by checking whether $\|y - x_K\| \leq \| y - x_{K+1} \|$.
This simple iterative approach to compute a closest lattice point is related to what is called the \emph{iterative slicer}~\cite{Shalvi_iterativeslicer_2009}.  Micciancio~\cite{Micciancio09adeterministic} describes a related, but more sophisticated, iterative algorithm that can compute a closest lattice point in a number of operations that grows exponentially as $O(2^{2 n})$.  This single exponential growth in complexity is the best known.  %Most popular algorithms for computed closest lattice points require a number of operations of order $O(n^n)$\cite{Agrell2002,Viterbo_sphere_decoder_1999,Pohst_sphere_decoder_1981}. 

Two factors contribute to the computational complexity of this iterative approach to compute a closest lattice point.  The first factor is computing the minimum over the set $\relevant(\Lambda) \cup \{\zerobf\}$ in~\eqref{eq:relvectsminimslicer}.  In general a lattice can have as many as $2^{n+1}-2$ relevant vectors so computing a minimiser directly can require a number of operations that grows exponentially with $n$.  To add to this it is often the case that the set of relevant vectors $\relevant(\Lambda)$ must be stored in memory so the algorithm can require an amount of memory that grows exponentially with $n$~\cite[Sec.~6]{Micciancio09adeterministic}\cite{Shalvi_iterativeslicer_2009}.  We will show that for a lattice of Voronoi's first kind the set of relevant vectors has a compact representation in terms of what is called its \emph{obtuse superbasis}.  To store the obtuse superbasis requires an amount of memory of order $O(n^2)$ in the worst case.  We also show that for a lattice of Voronoi's first kind the minimisation over $\relevant(\Lambda) \cup \{\zerobf\}$ in~\eqref{eq:relvectsminimslicer} can be solved efficiently by computing a minimum cut in an undirected flow network.  Using known algorithms a minimiser can be computed in $O(n^3)$ operations~\cite{Goldberg:1986:NAM:12130.12144,EdmondsKarp_max_flow,Cormen2001}. 

The other factor affecting the complexity is the number of iterations required before the algorithm arrives at a closest lattice point, that is, the size of $K$.  Proposition~\ref{eq:latticepointsinsphere} suggests that this number might be as large as $\ceil{r/\rho}^n$ where $r = \|y - x_0\|^2$ and $\rho$ is the packing radius of the lattice.  Thus, the number of iterations required might grow exponentially with $n$.  The number of iterations required depends on the lattice point that starts the iteration $x_0$.  It is helpful for $x_0$ to be, in some sense, a close approximation of the closest lattice point $x_K$.  Unfortunately, computing close approximations of a closest lattice point is known to be computationally difficult~\cite{feige_inapproximability_2004}.  We will show that for a lattice of Voronoi's first kind a simple and easy to compute choice for $x_0$ ensures that a closest lattice point is reached in at most $n$ iterations and so $K \leq n$.  Combining this with the fact that each iteration of the algorithm requires $O(n^3)$ operations results in an algorithm that requires $O(n^4)$ operations to compute a closest point in a lattice of Voronoi's first kind.

\begin{figure}[tp] 
	\centering      
		\includegraphics{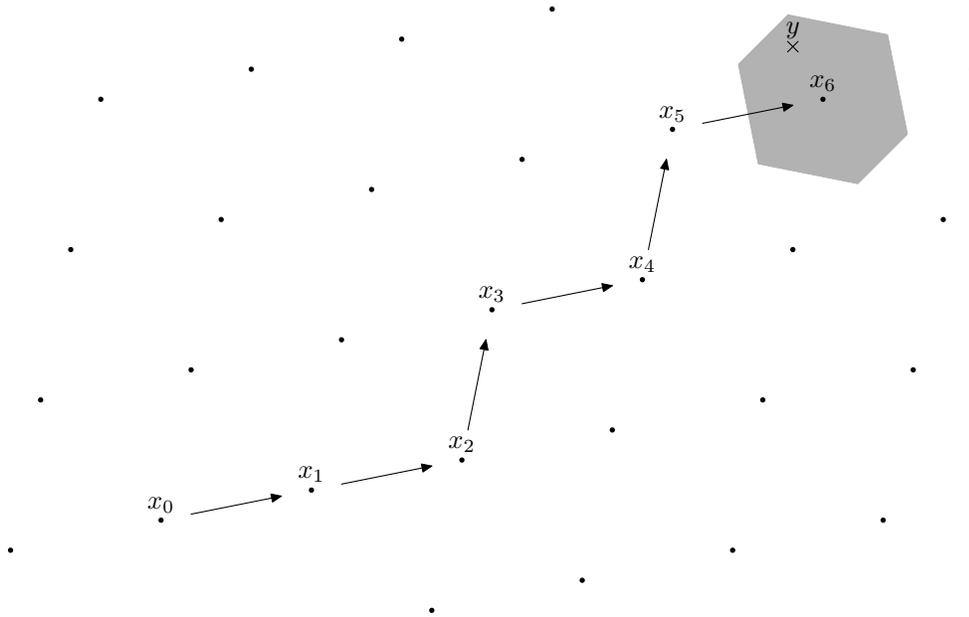} 
		\caption{Example of the iterative procedure described in~\eqref{eq:relvectsminimslicer} to compute a closest lattice point to $y = (4,3.5)$ (marked with a cross) in the $2$-dimensional lattice generated by basis vectors $(2,0.4)$ and $(0.4,2)$.  The initial lattice point for the iteration is $x_0 = (-4.4,-2.8)$.  The shaded region is the Voronoi cell surrounding the closest lattice point $x_6 = (4.4,2.8)$.}       
		\label{lattices:fig:iterativeexample} 
\end{figure}

\section{Lattices of Voronoi's first kind} \label{sec:latt-voron-first}

An $n$-dimensional lattice $\Lambda$ is said to be of \emph{Voronoi's first kind} if it has what is called an \emph{obtuse superbasis}~\cite{ConwaySloane1992_voronoi_lattice_3d_obtuse_superbases}.  That is, there exists a set of $n+1$ vectors $b_1,\dots,b_{n+1}$ such that $b_1,\dots,b_n$ are a basis for $\Lambda$,
\begin{equation}\label{eq:superbasecond}
b_1 + b_2 \dots + b_{n+1} = 0
\end{equation}
(the \emph{superbasis} condition), and the inner products satisfy
\begin{equation}\label{eq:obtusecond}
q_{ij} = b_i \cdot b_j \leq 0, \qquad \text{for} \qquad i,j = 1,\dots,n+1, i \neq j
\end{equation}
(the \emph{obtuse} condition).  The $q_{ij}$ are called the \emph{Selling parameters}~\cite{Selling1874}.  It is known that all lattices in dimensions less than $4$ are of Voronoi's first kind~\cite{ConwaySloane1992_voronoi_lattice_3d_obtuse_superbases}.  An interesting property of lattices of Voronoi's first kind is that their relevant vectors have a straightforward description.

\begin{theorem} \label{thm:revvecssuperbase} (Conway and Sloane~\cite[Theorem~3]{ConwaySloane1992_voronoi_lattice_3d_obtuse_superbases})
The relevant vectors of $\Lambda$ are of the form,
\[
\sum_{i \in I} b_i
\]
where $I$ is a strict subset of $\{1, 2, \dots, n+1\}$ that is not empty, i.e., $I \subset \{1, 2, \dots, n+1\}$ and $I \neq \emptyset$.
\end{theorem}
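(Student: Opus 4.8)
The plan is to reduce the statement to the minimisation of a nonnegative quadratic form over the cosets of $2\Lambda$ in $\Lambda$; the key point is that, for an obtuse superbasis, this quadratic form is a \emph{nonnegatively}-weighted sum of squared coordinate differences. First I would record the coset reformulation of relevance: taking $x = w \in \Lambda$ in the defining inequality $v \cdot x \le x \cdot x$ gives $\|v - 2w\|^2 = \|v\|^2 - 4v\cdot w + 4\|w\|^2 \ge \|v\|^2$, and this implication reverses, so $v$ attains the minimum Euclidean length in its coset $v + 2\Lambda$ precisely when it is relevant; I will use the standard sharpening that $\pm v$ are then the \emph{only} minimum-length vectors in the coset. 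Next I would use all $n+1$ superbasis vectors as redundant coordinates: each lattice point is $x = \sum_{i=1}^{n+1} m_i b_i$ with $m \in \ints^{n+1}$, uniquely up to adding a common integer to every $m_i$ (since $b_1,\dots,b_n$ are independent, $\sum_i b_i = 0$ is the only relation). From $b_i \cdot \sum_j b_j = 0$ one gets $q_{ii} = -\sum_{j \neq i} q_{ij}$, and substituting into $\|x\|^2 = \sum_{i,j} m_i m_j q_{ij}$ produces the identity
\[
  \magn{\textstyle \sum_{i=1}^{n+1} m_i b_i}^2 \;=\; \sum_{1 \le i < j \le n+1} (-q_{ij})\,(m_i - m_j)^2 ,
\]
in which every coefficient $-q_{ij}$ is nonnegative by the obtuse condition~\eqref{eq:obtusecond}.

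Then I would describe the cosets. A lattice point $\sum_i m_i b_i$ lies in $2\Lambda$ iff all the $m_i$ have the same parity, so the cosets of $2\Lambda$ correspond to parity patterns $\epsilon \in \{0,1\}^{n+1}$ taken up to complementation, equivalently to splittings $\{1,\dots,n+1\} = I \sqcup I^c$. Minimising the identity over all $m$ with $m \equiv \epsilon \pmod 2$ is immediate term by term: when $\epsilon_i = \epsilon_j$ the difference $m_i - m_j$ is even so $(m_i - m_j)^2 \ge 0$, and when $\epsilon_i \neq \epsilon_j$ it is odd so $(m_i - m_j)^2 \ge 1$; the choice $m = \epsilon$ meets all these lower bounds simultaneously, hence minimises the nonnegatively-weighted sum. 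So a minimum-length vector of the coset indexed by $I$ is $\sum_{i \in I} b_i$, which is nonzero precisely when $I$ is neither empty nor all of $\{1,\dots,n+1\}$ (the all-set giving $0$ by~\eqref{eq:superbasecond}). Hence, if $v$ is relevant it lies in such a coset, so $\|v\| = \|\sum_{i \in I} b_i\|$; and since a relevant vector is the unique minimum of its coset up to sign, $v = \pm \sum_{i \in I} b_i$, and $-\sum_{i \in I} b_i = \sum_{i \in I^c} b_i$ by~\eqref{eq:superbasecond}, so $v$ has the claimed form in either case. Running the termwise argument in reverse shows that when all $q_{ij} < 0$ the minima of a nontrivial coset are exactly $\pm \sum_{i \in I} b_i$, giving the converse; in degenerate cases some of these vectors merely contribute redundant halfspaces, and one still has $\vor(\Lambda) = \cap_I H_{\sum_{i\in I} b_i}$ because each $\sum_{i\in I} b_i$ is a nonzero lattice point.

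I expect the real work to be bookkeeping rather than ideas: proving the identity cleanly, managing the redundancy in the $(n+1)$-coordinate representation, and pinning down the parity/complementation description of $\Lambda / 2\Lambda$ (so that distinct cosets really are in bijection with the $2^n$ patterns). The obtuse hypothesis is used in exactly one place — the sign of the coefficients $-q_{ij}$ — and that single fact is what legitimises the termwise minimisation. The one subtlety to keep in mind is the degenerate case where some $q_{ij} = 0$: a coset may then have minimum-length vectors other than $\pm\sum_{i\in I} b_i$, so the theorem should be read as saying that every relevant vector is \emph{among} the $\sum_{i\in I} b_i$ (which is what is needed to cut out $\vor(\Lambda)$ and to run the algorithm), not that every such vector is relevant.
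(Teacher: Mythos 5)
The paper does not give a proof of this theorem; it is quoted from Conway and Sloane, so there is no internal proof to compare against. Your route is the classical one and is, to the best of my recollection, essentially the argument in the cited source: Voronoi's coset characterisation of relevance, the redundant $(n+1)$--coordinate representation, the Selling identity
\[
\|Bm\|^2 \;=\; \sum_{i<j} (-q_{ij})\,(m_i - m_j)^2 ,
\]
the parity description of $\Lambda/2\Lambda$, and termwise minimisation. The identity is correctly obtained from $Q\onebf = 0$, and the termwise step is exactly where the obtuse condition $-q_{ij}\ge 0$ is used, so the main argument is sound.

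There is one genuine gap, though it is inherited from the paper's own wording rather than introduced by you. The paper's definition of ``relevant'' is $v\cdot x \le x\cdot x$ for all $x\in\Lambda$, which, as you observe, is equivalent to $v$ being \emph{a} shortest vector in $v+2\Lambda$; it is \emph{not} equivalent to $\pm v$ being the only shortest vectors. That uniqueness is the ``standard sharpening'' you invoke, but it is the strictly stronger facet-defining condition and does not follow from the paper's inequality. Under the paper's weaker definition the theorem as literally stated is false in degenerate cases. Take $\Lambda = \ints^2$ with the obtuse superbasis $b_1=(1,0)$, $b_2=(0,1)$, $b_3=(-1,-1)$, so $q_{12}=0$. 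Then $v = b_1 - b_2 = (1,-1)$ satisfies $v\cdot x\le x\cdot x$ for every $x\in\ints^2$ (equivalently, $(a-\tfrac12)^2 + (b+\tfrac12)^2 \ge \tfrac12$ for all integers $a,b$), so it is relevant in the paper's sense, yet it is not $\sum_{i\in I} b_i$ for any proper nonempty $I$. Your own equality analysis in the termwise step detects exactly this: when some $q_{ij}=0$ the corresponding coordinate difference is unconstrained and extra coset minima appear. So what your argument correctly establishes is that each nonzero coset of $2\Lambda$ has $\pm\sum_{i\in I} b_i$ among its minima, hence $\vor(\Lambda)=\cap_I H_{\sum_{i\in I} b_i}$, and that under the usual (facet/unique-minimum) definition of relevance every relevant vector has the stated form. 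Your closing suggestion that the theorem be read as ``every relevant vector is among the $\sum_{i\in I} b_i$'' is the right fix only if ``relevant'' is read in the strict sense, not the paper's stated sense; it would be worth making that substitution explicit rather than leaning on the ``standard sharpening'' as though it followed from the paper's definition. Everything the algorithm actually needs is unaffected.
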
  
 
% A number of important lattices are of Voronoi's first kind.  For example, the \emph{root lattices} $\ints^n, A_n, E_6, E_7$ and $E_8$ are of Voronoi's first kind~\cite{SPLAG}.  The root lattice $E_8$ has obtuse superbasis
% \[

% \]

% The root lattice $D_n$ has obtuse superbasis given by the vectors
% \begin{align*}
% b_1 &= -e_1 - e_2 \\
% b_i &= e_{i} - e_{i+1}, \qquad i = 2,\dots,n \\
% b_{n+1} &= e_{2} + e_{n}.
% \end{align*}

Classical examples of lattices of Voronoi's first kind are the $n$ dimensional root lattice $A_n$ and its dual lattice $A_n^*$~\cite{SPLAG}.  % These lattices have convenient bases represented in $n+1$ dimensional space.  An obtuse superbasis for $A_n$ is of the form
% \begin{align*}
% b_i &= e_i - e_{i+1}, \qquad i = 1, \dots, n \\
% b_{n+1} &= e_{n+1} - e_{1},
% \end{align*}
% where $e_i$ is a vector of length $n+1$ with elements equal to zero except for the $i$th element that is equal to one.  The dual lattice $A_{n}^*$ is also of Voronoi's first kind.  The $i$th vector $b_i$ of the obtuse superbasis has $j$th element equal to
% \[
% b_{ij} = \begin{cases}
% \frac{n}{n+1} & j = i \\
% -\frac{1}{n+1} & j \neq i.
% \end{cases}
% \]
For $A_n$ and $A_n^*$ there exist efficient algorithms that can compute a closest lattice point in $O(n)$ operations~\cite{McKilliam2009CoxeterLattices,Conway1982FastQuantDec}. For this reason we do not recommend using the algorithm described in this paper for $A_n$ and $A_n^*$.  
The fast algorithms for $A_n$ and $A_n^*$ rely of the special structure of these lattices and are not applicable to other lattices.  In contrast, the algorithm we describe here works for all lattices of Voronoi's first kind.  Questions that arise are: how ``large'' (in some sense) is the set of lattices of Voronoi's first kind?  Are there lattices of Voronoi's first kind that are useful in applications such as coding, quantisation, or signal processing?  We discuss these questions in Section~\ref{sec:discussion}.  We now focus on the problem of computing a closest lattice point in a lattice of Voronoi's first kind. 

%However, there exist a large number of lattice of first
%A lattice of Voronoi's first kind appears nontrivial.  One approach would be to construct the matrix $Q = B^\prime B$ containing the Selling parameters $q_{ij}$.  The matrix $Q$ requires to have nonpositive off diagonal entries and to be positive semidefinite with null space containing the vector $\onebf$ of all ones.  The explicit construction of such matrices is an interesting question for future research.  A related question is to determine, in some sense, how `large' the set of matrices satisfying these criteria is.  This would shed some light on how `large' the set of lattices of Voronoi's first kind is.  

\section{A series of relevant vectors from a lattice of Voronoi's first kind}\label{sec:seri-relev-vect}

We are interested in solving the closest lattice point problem for lattices of Voronoi's first kind.  Let $\Lambda \subset \reals^m$ be an $n$ dimensional lattice of Voronoi's first kind with obtuse superbasis $b_1,\dots,b_{n+1}$ and let $y \in \reals^m$.  We want to find $n$ integers $w_1,\dots,w_n$ that minimise
\[
\| y - \sum_{i=1}^n b_i w_i \|^2.
\]
We can equivalently find $n+1$ integers $w_1,\dots,w_{n+1}$ that minimise
\[
\| y - \sum_{i=1}^{n+1} b_i w_i \|^2.
\]
The iterative procedure described in~\eqref{eq:relvectsminimslicer} will be used to do this.  In what follows it is assumed that $y$ lies in the space spanned by the basis vectors $b_1,\dots,b_{n}$.  This assumption is without loss of generality because $x$ is a closest lattice point to $y$ if and only if $x$ is a the closest lattice point to the orthogonal projection of $y$ into the space spanned by $b_1,\dots,b_{n}$.  Let
\begin{equation}\label{eq:matrxBobtusebasis}
B = (b_1\,\,b_2\,\,\dots\,\,b_{n+1})
\end{equation}
be the $n+1$ by $n+1$ matrix with columns given by $b_1,\dots,b_{n+1}$ and let $z \in \reals^{n+1}$ be a column vector such that $y = Bz$.  We now want to find a column vector $w = (w_1,\dots,w_{n+1})^\prime$ of integers such that
\begin{equation}\label{eq:tominimise}
\| B(z  -  w) \|^2
\end{equation}
is minimised.  Define the column vector $u_0 = \floor{z}$ where $\floor{\cdot}$ denotes the largest integer less than or equal to its argument and operates on vectors elementwise. In view of Theorem~\ref{thm:revvecssuperbase} the iterative procedure~\eqref{eq:relvectsminimslicer} to compute a closest lattice point can be written in the form
\begin{align}
x_{k+1} &= B u_{k+1} \label{eq:xseqfirsttype}  \\
u_{k+1} &= u_k + t_k \nonumber \\
t_k &= \arg\min_{t \in \{0,1\}^{n+1}}\| B(z - u_k - t) \|^2, \label{eq:pvecmin}
\end{align}
where $\{0,1\}^{n+1}$ denotes the set of column vectors of length $n+1$ with elements equal to zero or one.  The procedure is initialised at the lattice point $x_0 = Bu_0 = B\floor{z}$.  This choice of initial lattice point is important.  In Section~\ref{sec:comp-clos-relev} we show how minimisation over $\{0,1\}^{n+1}$ in~\eqref{eq:pvecmin} can be computed efficiently in $O(n^3)$ operations by computing a minimum cut in an undirected flow network.  The minimiser may not be unique corresponding with the existence of multiple minimum cuts.  In this case any one of the minimisers may be chosen.  Our results do not depend on this choice.   In the remainder of this section we prove that this iterative procedure results in a closest lattice point after at most $n$ iterations.  That is, we show that there exists a positive integer $K \leq n$ such that $x_K$ is a closest lattice point to $y = Bz$.

\newcommand{\rng}{\operatorname{rng}}
\newcommand{\subrng}{\operatorname{subr}}
\newcommand{\decrng}{\operatorname{decrng}}

It is necessary to introduce some notation.  For $S$ a subset of indices $\{1,\dots,n+1\}$ let $\onebf_S$ denote the column vector of length $n+1$ with $i$th element equal to one if $i \in S$ and zero otherwise. %Thus, for example, if $p \in \reals^{n+1}$, then $p+\onebf_S$ denotes the vector $p$ with those elements having index from $S$ incremented by one.  
For $S \subseteq \{1,\dots,n+1\}$ and $p \in \reals^{n+1}$ we define the function
\[
\Phi(S, p) = \sum_{i \in S}\sum_{j \notin S}q_{ij}(1 + 2p_i - 2p_j)
\]
where $q_{ij} = \dotprod{b_i}{b_j}$ are the Selling parameters from~\eqref{eq:obtusecond}.  We denote by $\bar{S}$ the complement of the set of indices $S$, that is $\bar{S} = \{ i \in \{1,\dots,n+1\} \mid i \notin S\}$. 

\begin{lemma}\label{lem:decSellings}
Let $p \in \reals^{n+1}$ and let $S$ and $T$ be subsets of the indices of $p$.  The following equalities hold:
\begin{enumerate}
\item  $\|Bp\|^2 - \|B(p + \onebf_S)\|^2 = \Phi(S, p)$, \label{eq:lem:decSellingsinc}
\item  $\|Bp\|^2 - \|B(p - \onebf_S)\|^2 = \Phi(\bar{S}, p)$, \label{eq:lem:decSellingsdec}
\item  ${\displaystyle \|Bp\|^2 - \|B(p + \onebf_S - \onebf_T)\|^2 = \Phi(S, p) + \Phi(\bar{T},p) + \sum_{i\in S}\sum_{j\in T} q_{ij} }$.  \label{eq:lem:decSellingsincdec}
\end{enumerate}
\end{lemma}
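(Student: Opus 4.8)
The plan is to prove all three identities by expanding the squared norms in terms of the Selling parameters $q_{ij} = \dotprod{b_i}{b_j}$ and exploiting the superbasis condition $\sum_{i=1}^{n+1} b_i = 0$, which forces the row sums of the Gram matrix to vanish: $\sum_{j=1}^{n+1} q_{ij} = \dotprod{b_i}{\sum_j b_j} = 0$ for every $i$. Writing $q_{ii} = \dotprod{b_i}{b_i}$, this gives $q_{ii} = -\sum_{j \neq i} q_{ij}$, and I expect this relation to be the single key fact that converts a generic quadratic form into the asymmetric expression $\sum_{i\in S}\sum_{j\notin S} q_{ij}(1 + 2p_i - 2p_j)$ appearing in $\Phi$.

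First I would prove~\eqref{eq:lem:decSellingsinc}. Expand
\[
\|B(p+\onebf_S)\|^2 = \|Bp\|^2 + 2\dotprod{Bp}{B\onebf_S} + \|B\onebf_S\|^2,
\]
so that $\|Bp\|^2 - \|B(p+\onebf_S)\|^2 = -2\,(B\onebf_S)^{\!\prime}Bp - \|B\onebf_S\|^2$. Now $\|B\onebf_S\|^2 = \sum_{i\in S}\sum_{j\in S} q_{ij}$ and $(B\onebf_S)^{\!\prime}Bp = \sum_{i\in S}\sum_{j=1}^{n+1} q_{ij}p_j$. The task is then purely algebraic: show
\[
-2\sum_{i\in S}\sum_{j} q_{ij}p_j - \sum_{i\in S}\sum_{j\in S} q_{ij} = \sum_{i\in S}\sum_{j\notin S} q_{ij}(1+2p_i-2p_j).
\]
I would split the left-hand inner sums over $j$ into $j\in S$ and $j\notin S$, use $\sum_{j} q_{ij} = 0$ (hence $\sum_{j\in S} q_{ij} = -\sum_{j\notin S} q_{ij}$) to rewrite the $j\in S$ contributions, and use the symmetry $q_{ij}=q_{ji}$ to relabel one of the double sums; the $p_i$ term in $\Phi$ emerges precisely from this relabelling of $\sum_{i\in S}\sum_{j\in S} q_{ij}p_j$. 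This is the main computational obstacle, but it is a finite, deterministic manipulation.

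Given~\eqref{eq:lem:decSellingsinc}, identity~\eqref{eq:lem:decSellingsdec} follows immediately by substitution: apply~\eqref{eq:lem:decSellingsinc} with $p$ replaced by $p - \onebf_S$ and $S$ replaced by $\bar S$, noting $\onebf_S + \onebf_{\bar S} = \onebf$ and that $B\onebf = \sum_i b_i = 0$, so $B(p - \onebf_S + \onebf_{\bar S}) = Bp$; then $\|B(p-\onebf_S)\|^2 - \|Bp\|^2 = -\Phi(\bar S, p - \onebf_S)$, and one checks $\Phi(\bar S, p-\onebf_S) = \Phi(\bar S, p)$ is false in general — rather one recomputes $\Phi(\bar S, p - \onebf_S)$ directly, observing that shifting $p$ by $-\onebf_S$ changes $p_i - p_j$ by $-1$ when $i\in\bar S, j\in S$, i.e.\ for every pair in the double sum defining $\Phi(\bar S,\cdot)$, which exactly cancels the $+1$. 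Finally,~\eqref{eq:lem:decSellingsincdec} is obtained by telescoping: write $\|Bp\|^2 - \|B(p+\onebf_S-\onebf_T)\|^2 = \big(\|Bp\|^2 - \|B(p+\onebf_S)\|^2\big) + \big(\|B(p+\onebf_S)\|^2 - \|B(p+\onebf_S-\onebf_T)\|^2\big)$, apply~\eqref{eq:lem:decSellingsinc} to the first bracket and~\eqref{eq:lem:decSellingsdec} (with $p+\onebf_S$ in place of $p$) to the second, giving $\Phi(S,p) + \Phi(\bar T, p+\onebf_S)$, and then expand $\Phi(\bar T, p+\onebf_S) = \sum_{i\notin T}\sum_{j\in T} q_{ij}(1 + 2(p_i + [i\in S]) - 2(p_j + [j\in S]))$; separating the indicator terms from the $\Phi(\bar T,p)$ part and collecting the $q_{ij}$ with $i\in S, j\in T$ (and $i\notin S$, $j\in T$, etc.) using $\sum q_{ij}=0$ once more yields the claimed cross term $\sum_{i\in S}\sum_{j\in T} q_{ij}$. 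The only real work is in~\eqref{eq:lem:decSellingsinc}; the rest is bookkeeping with indicator functions and the vanishing-row-sum identity.
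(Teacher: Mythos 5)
Your proof of part~\ref{eq:lem:decSellingsinc} uses essentially the same ingredients as the paper (expand the quadratic form via $Q = B'B$, use the vanishing row sums $\sum_j q_{ij} = 0$ and the symmetry of $Q$), just written with index sums rather than the paper's Hadamard-product notation; that part is sound.

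Your reduction of part~\ref{eq:lem:decSellingsdec} to part~\ref{eq:lem:decSellingsinc}, however, contains a genuine error. You substitute $p \mapsto p - \onebf_S$ and $S \mapsto \bar S$ in part~\ref{eq:lem:decSellingsinc} and claim $B(p - \onebf_S + \onebf_{\bar S}) = Bp$. This is false: since $\onebf_{\bar S} = \onebf - \onebf_S$ and $B\onebf = 0$, we have $B\onebf_{\bar S} = -B\onebf_S$, so $B(p - \onebf_S + \onebf_{\bar S}) = Bp - 2B\onebf_S \neq Bp$ unless $B\onebf_S = 0$. The ensuing digression about recomputing $\Phi(\bar S, p - \onebf_S)$ — and in particular the claim that the shift changes $p_i - p_j$ by $-1$ for $i\in\bar S$, $j\in S$ — compounds the error (the shift changes that difference by $+1$, not $-1$). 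The fix is much simpler than what you attempted: apply part~\ref{eq:lem:decSellingsinc} with $S$ replaced by $\bar S$ and $p$ \emph{unchanged}. Because $B\onebf_{\bar S} = -B\onebf_S$, one has $\|B(p + \onebf_{\bar S})\|^2 = \|B(p - \onebf_S)\|^2$, so part~\ref{eq:lem:decSellingsinc} immediately gives $\|Bp\|^2 - \|B(p - \onebf_S)\|^2 = \Phi(\bar S, p)$, with no recomputation of $\Phi$ at all. (The paper says the proof of part~\ref{eq:lem:decSellingsdec} is ``similar'' to part~\ref{eq:lem:decSellingsinc} rather than deriving it from it, so your strategy of a direct reduction is actually the more economical route — once corrected.)

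For part~\ref{eq:lem:decSellingsincdec}, your telescoping through the intermediate point $p + \onebf_S$ (giving $\Phi(S,p) + \Phi(\bar T, p + \onebf_S)$) differs from the paper's decomposition, which writes the left side as $\bigl(\|Bp\|^2 - \|B(p+\onebf_S)\|^2\bigr) + \bigl(\|Bp\|^2 - \|B(p-\onebf_T)\|^2\bigr)$ plus the cross term $2\,\onebf_S'Q\onebf_T$ arising from expanding $\|B(\onebf_S - \onebf_T)\|^2$. Your route is legitimate, but you leave the crucial last step — showing that $\Phi(\bar T, p + \onebf_S) - \Phi(\bar T, p)$ equals the claimed cross term — as unverified ``bookkeeping.'' That step is exactly where the real content lies, and it is where you should carry out the index manipulation in full (again using $\sum_j q_{ij} = 0$) before asserting the result.
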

\begin{proof}
Part~\ref{eq:lem:decSellingsincdec} follows immediately from parts~\ref{eq:lem:decSellingsinc} and~\ref{eq:lem:decSellingsdec} because
\begin{align*}
\|Bp\|^2 - &\|B(p + \onebf_S - \onebf_T)\|^2 \\
&= \|Bp\|^2-\|B(p + \onebf_S)\|^2 +  \|Bp\|^2-\|B(p - \onebf_T)\|^2 + \sum_{i\in S}\sum_{j\in T} q_{ij}.
\end{align*}
We give a proof for part~\ref{eq:lem:decSellingsinc}.  The proof for part~\ref{eq:lem:decSellingsdec} is similar.  
Put $Q = B^\prime B$ where superscript $^\prime$ indicates the vector or matrix transpose.  The $n+1$ by $n+1$ matrix $Q$ has elements given by the Selling parameters, that is, $Q_{ij} = q_{ij}=\dotprod{b_i}{b_j}$.  Denote by $\onebf$ the column vector of length $n+1$ containing all ones.  Now $B\onebf = \sum_{i=1}^{n+1}b_i = \zerobf$ as a result of the superbasis condition~\eqref{eq:superbasecond} and so $Q\onebf = \zerobf$.  Since $\onebf_S = \onebf - \onebf_{\bar{S}}$ it follows that $Q\onebf_S = -Q\onebf_{\bar{S}}$.  With $\circ$ the elementwise vector product, i.e., the Schur or Hadamard product, we have
 \begin{align*}
 \|Bp\|^2 - \|B(p + \onebf_S)\|^2 &= -\onebf^\prime_S Q \onebf_S - 2 p^\prime Q \onebf_S \\
 &= \onebf^\prime_S Q \onebf_{\bar{S}} - 2 p^\prime Q \onebf_S \\
&= \onebf^\prime_S Q \onebf_{\bar{S}} - 2 (p\circ\onebf_{\bar{S}})^\prime Q \onebf_S - 2 (p\circ\onebf_S)^\prime Q \onebf_S \\
&= \onebf^\prime_S Q \onebf_{\bar{S}} - 2 (p\circ\onebf_{\bar{S}})^\prime Q \onebf_{S} + 2 (p\circ\onebf_S)^\prime Q \onebf_{\bar{S}}
 \end{align*}
which is precisely $\Phi(S,p)$.
\end{proof}

% \begin{lemma}\label{lem:decSellingsincanddec}
% With the same symbols as in Lemma~\ref{lem:decSellings} and with $T$ a subset of the indices of $p$,
% \[
% \|Bp\|^2 - \|B(p + \onebf_S + \onebf_T)\|^2 = \sum_{i \in S}\sum_{j \notin S}q_{ij}(1 + 2p_i - 2p_j) + \sum_{i \in T}\sum_{j \notin T}q_{ij}(1 + 2p_i - 2p_j) + \sum_{i \in S}\sum_{j \in T} 2q_{ij}
% \]
% \end{lemma}

Denote by $\min(p)$ and $\max(p)$ the minimum and maximum values obtained by the elements of the vector $p$ and define the function
\[
\rng(p) = \max(p) - \min(p).
\] 
%to return the difference between the maximum and minimum of $p$.  %For example, if $p = (2,-1,4) \in \ints^3$ then $\rng(p) = 4 - (-1)=5$.  We might also 
%For example $\rng(2,-1,4)= 4 - (-1) = 5$.  
Observe that $\rng(p)$ cannot be negative and that if $\rng(p) = 0$ then all of the elements of $p$ are equal.  We define the function $\subrng(p)$ to return the largest subset, say $S$, of the indices of $p$ such that $\min\{p_i, i \in S\} - \max\{p_i, i \notin S\} \geq 2.$  If no such subset exists then $\subrng(p)$ is the empty set $\emptyset$.  For example, 
\[
\subrng(2,-1,4) = \{1,3\}, \qquad \subrng(2,1,3) = \emptyset, \quad  \subrng(1,3,1) = \{2\}.
\]  
To make the definition of $\subrng$ clear we give the following alternative and equivalent definition.  Let $p \in \reals^n$ and let $\sigma$ be the permutation of the indices $\{1,\dots,n\}$ that puts the elements of $p$ in ascending order, that is
\[
p_{\sigma(1)} \leq p_{\sigma(2)} \leq \dots \leq p_{\sigma(n)}.
\]  
Let $T$ be the smallest integer from $\{2,\dots,n\}$ such that $p_{\sigma(T)} - p_{\sigma(T-1)} \geq 2$.  If no such integer $T$ exists then $\subrng(p) = \emptyset$.  Otherwise 
\[
\subrng(p) =  \{ \sigma(T), \sigma(T+1), \dots, \sigma(n) \}.
\]
The following straightforward property of $\subrng$ will be useful.

\begin{proposition}\label{prop:subrrngsmall}
Let $p \in \ints^{n+1}$.  If $\subrng(p) = \emptyset$ then $\rng(p) \leq n$.
\end{proposition}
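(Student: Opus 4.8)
The plan is to work with the alternative, sorting-based description of $\subrng$ given immediately before the proposition, since it turns the claim into a one-line telescoping estimate. Let $\sigma$ be a permutation of $\{1,\dots,n+1\}$ that arranges the entries of $p$ in ascending order, so that $p_{\sigma(1)} \leq p_{\sigma(2)} \leq \dots \leq p_{\sigma(n+1)}$. By definition of $\rng$ we have $\rng(p) = \max(p) - \min(p) = p_{\sigma(n+1)} - p_{\sigma(1)}$.

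Next I would unpack the hypothesis. By the equivalent definition of $\subrng$, the condition $\subrng(p) = \emptyset$ means there is no index $T \in \{2,\dots,n+1\}$ with $p_{\sigma(T)} - p_{\sigma(T-1)} \geq 2$. Hence for every $T \in \{2,\dots,n+1\}$ we have $p_{\sigma(T)} - p_{\sigma(T-1)} < 2$, and because $p \in \ints^{n+1}$ each such difference is a nonnegative integer, so in fact $0 \leq p_{\sigma(T)} - p_{\sigma(T-1)} \leq 1$.

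The conclusion then follows by telescoping the sorted differences:
\[
\rng(p) = p_{\sigma(n+1)} - p_{\sigma(1)} = \sum_{T=2}^{n+1}\bigl(p_{\sigma(T)} - p_{\sigma(T-1)}\bigr) \leq \sum_{T=2}^{n+1} 1 = n .
\]
I do not expect a genuine obstacle here; the only subtlety is the use of integrality to upgrade the strict bound $p_{\sigma(T)} - p_{\sigma(T-1)} < 2$ to $p_{\sigma(T)} - p_{\sigma(T-1)} \leq 1$, together with the trivial bookkeeping that a list of $n+1$ sorted entries has exactly $n$ consecutive pairs.
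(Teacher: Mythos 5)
Your proof is correct and follows the same strategy as the paper: sort the entries via a permutation $\sigma$, use $\subrng(p)=\emptyset$ together with integrality to bound each consecutive sorted difference by $1$, and then telescope across the $n$ gaps. The paper writes the final step as a chain of inequalities rather than a telescoping sum, but the argument is identical.
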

\begin{proof}
Let $\sigma$ be the permutation of the indices $\{1,\dots,n+1\}$ that puts the elements of $p$ in ascending order.  Because $\subrng(p) = \emptyset$ and because the elements of $p$ are integers we have $p_{\sigma(i+1)} \leq p_{\sigma(i)} + 1$ for all $i=1,\dots,n$.  It follows that
\[
p_{\sigma(n+1)} \leq p_{\sigma(n)} + 1 \leq p_{\sigma(n-1)} + 2 \leq \dots \leq p_{\sigma(1)} + n.
\]
and so $\rng(p) = p_{\sigma(n+1)} - p_{\sigma(1)} \leq n$.
\end{proof}

Finally we define the function
\[
\decrng(p) = p -  \onebf_{\subrng(p)}
\]
that decrements those elements from $p$ with indices from $\subrng(p)$.  If $\subrng(p) = \emptyset$, then $\decrng(p) = p$, that is, $\decrng$ does not modify $p$.  On the other hand, if $\subrng(p) \neq \emptyset$ then
\[
\rng\big(\decrng(p)\big) = \rng(p) - 1
\]
because $\subrng(p)$ contains all those indices $i$ such that $p_i = \max(p)$.  By repeatedly applying $\decrng$ to a vector one eventually obtains a vector for which further application of $\decrng$ has no effect.  For example,
\begin{align*}
\decrng(2,-1,4) &= (2,-1,4) - \onebf_{\subrng(2,-1,4)} = (2,-1,4) - \onebf_{\{1,3\}} = (1,-1,3) \\ 
\decrng(1,-1,3) &= (1,-1,3) - \onebf_{\{1,3\}} = (0,-1,2) \\ 
\decrng(0,-1,2) &= (0,-1,2) - \onebf_{\{3\}} = (0,-1,1) \\ 
\decrng(0,-1,1) &= (0,-1,1) - \onebf_{\emptyset} = (0,-1,1).
\end{align*}
This will be a useful property so we state it formally in the following proposition.

\begin{proposition} \label{lem:repeatappdecrange} Let $p \in \reals^{n+1}$ and define the infinite sequence $d_0,d_1,d_2,\dots$ of vectors according to $d_0=p$ and $d_{k+1} = \decrng(d_k)$.  There is a finite integer $T$ such that $d_T=d_{T+1}=d_{T+2}=\dots$.
\end{proposition}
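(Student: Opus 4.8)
The plan is to use $\rng$ as a nonnegative, integer-step-decreasing potential along the iteration. I would write $r_k = \rng(d_k)$ for $k \ge 0$ and first isolate the two facts already recorded in the discussion preceding the proposition: (i) if $\subrng(d_k) = \emptyset$ then $d_{k+1} = \decrng(d_k) = d_k$, and hence also $\subrng(d_{k+1}) = \subrng(d_k) = \emptyset$, so the sequence is constant from index $k$ onward; and (ii) if $\subrng(d_k) \neq \emptyset$ then $r_{k+1} = \rng\big(\decrng(d_k)\big) = r_k - 1$. I would also note that $r_k \ge 0$ for every $k$, since a range is never negative.

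With these facts the argument is immediate. Suppose, for contradiction, that no finite $T$ with $d_T = d_{T+1} = d_{T+2} = \cdots$ exists. Then (i) forces $\subrng(d_k) \neq \emptyset$ for every $k$, so (ii) gives $r_k = r_0 - k = \rng(p) - k$ for all $k$; choosing $k$ larger than $\rng(p)$ makes $r_k < 0$, contradicting $r_k \ge 0$. Hence some finite index has $\subrng$ equal to $\emptyset$; taking $T$ least with $\subrng(d_T) = \emptyset$ and applying (i) yields $d_T = d_{T+1} = d_{T+2} = \cdots$. The same estimate also gives an explicit bound — every $k < T$ has $\subrng(d_k) \neq \emptyset$, so $r_k \ge 2$ and therefore $T - 1 \le \rng(p) - 2$ — but only finiteness is asked for here.

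I do not expect a real obstacle: the whole content is that $\decrng$ either leaves its argument fixed (in which case it does so forever after) or strictly decreases $\rng$ by one, and $\rng$ cannot drop below zero. The only point needing care, namely that a nontrivial application of $\decrng$ lowers the range by exactly one — because $\subrng(p)$ contains every coordinate attaining $\max(p)$, all of which get decremented, while the coordinate attaining $\min(p)$ is not in $\subrng(p)$ and stays fixed — is already justified just above the proposition, so it can simply be invoked.
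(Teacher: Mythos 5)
Your argument is correct and essentially identical to the paper's: both treat $\rng$ as a nonnegative integer-valued potential that drops by exactly one on each nontrivial application of $\decrng$, and derive a contradiction by iterating more than $\rng(p)$ times. The extra observations you record (that an empty $\subrng$ propagates forward, and the explicit bound on $T$) are harmless but not part of the paper's argument.
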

\begin{proof}
Assume that no such $T$ exists.  Then $\decrng(d_k) \neq d_k$ for all positive integers $k$ and so 
\[
\rng(d_{k}) = \rng(d_{k-1}) - 1 = \rng(d_{k-2}) - 2 = \dots = \rng(p) - k.
\]  
Choosing $k > \rng(p)$ we have $\rng(d_{k}) < 0$ contradicting that $\rng(d_k)$ is nonegative.
\end{proof}

We are now ready to study properties of a closest lattice point in a lattice of Voronoi's first kind.

\begin{lemma}\label{lem:decrngpreservesclosestpoints}
If $v \in \ints^{n+1}$ such that $B(\floor{z} + v)$ is a closest lattice point to $y = Bz$, then $B\big(\floor{z} + \decrng(v)\big)$ is also a closest lattice point to $y$.
\end{lemma}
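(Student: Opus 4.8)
The plan is to reduce everything to the closest-point characterization~\eqref{eq:relvectnearpointieq} together with the algebraic identities of Lemma~\ref{lem:decSellings}. Write $p = \floor{z} + v$ and let $S = \subrng(v)$; if $S = \emptyset$ there is nothing to prove, so assume $S \neq \emptyset$. Since $B(\floor z + v)$ is a closest point, the Voronoi inequality~\eqref{eq:relvectnearpointieq} holds for $y - B(\floor z + v)$, and by Theorem~\ref{thm:revvecssuperbase} this means $\|B(\floor z + v - u)\|^2 \le \|B(\floor z + v)\|^2$ translated appropriately — more precisely, that $\floor z + v$ minimizes $\|B(z - w)\|^2$ over $w \in \integers^{n+1}$, and equivalently that for every $I \subsetneq \{1,\dots,n+1\}$ both $\onebf_I$ and $-\onebf_I$ fail to decrease the objective from $p$. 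The goal is to show the same holds from $p - \onebf_S = \floor z + \decrng(v)$.

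First I would record, using parts~\ref{eq:lem:decSellingsinc} and~\ref{eq:lem:decSellingsdec} of Lemma~\ref{lem:decSellings}, what optimality of $p$ says in terms of $\Phi$: namely $\Phi(I, z - p) \le 0$ and $\Phi(\bar I, z - p) \le 0$ for all nonempty proper $I$ (these are exactly the relevant-vector inequalities, since adding $\pm\onebf_I$ to $p$ corresponds to adding $\mp B\onebf_I$ to $y - Bp$, and $B\onebf_I$ ranges over the relevant vectors and their negatives). The crucial structural fact about $S = \subrng(v)$ is that it "separates" the coordinates: every index in $S$ has a strictly larger $v$-value — indeed by at least $2$ — than every index outside $S$. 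I would then check, for an arbitrary nonempty proper $I$, that moving from $p - \onebf_S$ by $+\onebf_I$ or $-\onebf_I$ does not decrease $\|B(z-\cdot)\|^2$; expanding this with Lemma~\ref{lem:decSellings} produces an expression of the form $\Phi(\text{something},\, z-p) + (\text{cross terms } \sum q_{ij})$, where the extra cross terms carry a definite sign because the relevant $q_{ij} \le 0$ by the obtuse condition~\eqref{eq:obtusecond} and because the index sets involved are controlled by the separation property of $S$. The point is that shifting by $\onebf_S$ only helps: it moves the large coordinates down toward the small ones, and any perturbation $\pm\onebf_I$ that would have been bad from $p$ stays bad (or gets worse in the right direction) from $p - \onebf_S$, while perturbations newly available are absorbed by the nonpositivity of the Selling parameters.

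The main obstacle I anticipate is the bookkeeping in the last step: one must carefully split $I$ according to its intersection with $S$ and $\bar S$, apply part~\ref{eq:lem:decSellingsincdec} of Lemma~\ref{lem:decSellings} to handle the mixed increment $\onebf_S - \onebf_{I}$ (or $\onebf_S + \onebf_I$), and verify that every leftover double sum $\sum_{i}\sum_{j} q_{ij}$ ranges over pairs of indices whose sign contribution is favorable — this is where the "$\ge 2$" gap in the definition of $\subrng$, rather than just "$> 0$", should be exactly what is needed so that the $1 + 2p_i - 2p_j$ factors appearing inside $\Phi$ keep the correct sign after the shift. A clean way to organize it is to prove the single inequality $\|B(z - p)\|^2 \le \|B(z - p + \onebf_S - \onebf_I)\|^2$ and its counterpart with $-\onebf_I$ for all proper nonempty $I$, treating $I \subseteq \bar S$, $I \supseteq S$, and the genuinely mixed case separately, and in each case reading off the sign from~\eqref{eq:obtusecond} and the separation of $S$. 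Once those inequalities are in hand, \eqref{eq:relvectnearpointieq} applied to $y - B(\floor z + \decrng(v))$ gives that $B(\floor z + \decrng(v))$ is a closest lattice point, completing the proof.
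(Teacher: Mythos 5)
Your proposal takes a substantially harder route than the paper, and it is incomplete in exactly the place you flag as ``the main obstacle.'' The paper does \emph{not} verify the Voronoi inequalities \eqref{eq:relvectnearpointieq} for the new point $B(\floor{z}+\decrng(v))$ one relevant vector at a time. It instead establishes a \emph{single} inequality: that $B(\floor{z}+\decrng(v))$ is no farther from $y$ than $B(\floor{z}+v)$. Since the latter is closest by hypothesis, the former must also be closest --- this is immediate from the definition of a closest lattice point and needs no appeal to the Voronoi-cell characterisation at all. Concretely, with $S=\subrng(v)$, $u=\decrng(v)=v-\onebf_S$, and $\zeta=z-\floor{z}$, part~\ref{eq:lem:decSellingsinc} of Lemma~\ref{lem:decSellings} applied to $p=\zeta-v$ gives
\[
\|B(\zeta-v)\|^2-\|B(\zeta-u)\|^2=\Phi(S,\zeta-v)=\sum_{i\in S}\sum_{j\notin S}q_{ij}\bigl(1+2(\zeta_i-\zeta_j)-2(v_i-v_j)\bigr).
\]
The decisive quantitative fact --- which you gesture at (``the $\ge 2$ gap \ldots should be exactly what is needed'') but never pin down --- is that $\zeta_i\in[0,1)$, hence $\zeta_i-\zeta_j\in(-1,1)$, while $v_i-v_j\ge 2$ for $i\in S$, $j\notin S$. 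Each factor is therefore $<1+2-4=-1<0$, and since $q_{ij}\le 0$ every term is $\ge 0$, so $\|B(\zeta-u)\|^2\le\|B(\zeta-v)\|^2$. That is the whole proof.

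Your plan instead tries to deduce $\Phi(I,\,z-p+\onebf_S)\le 0$ for every nonempty proper $I$ directly from the analogous inequalities at $p$. As it stands this has a real gap, not just bookkeeping: the naive comparison fails. For example when $I\subseteq\bar S$, a short computation gives $\Phi(I,\,z-p+\onebf_S)=\Phi(I,\,z-p)-2\sum_{i\in I}\sum_{j\in S}q_{ij}$, and the correction term is $\ge 0$ because $q_{ij}\le 0$ --- so the known bound $\Phi(I,\,z-p)\le 0$ is of no use. One can rescue the case by invoking $\Phi(I\cup S,\,z-p)\le 0$ and then separately bounding the leftover sum over $S\times(\bar S\setminus I)$ using the $\ge 2$ gap and $\zeta\in[0,1)^{n+1}$, but this is precisely the calculation the paper does once, repackaged with extra case analysis for every $I$ and both signs $\pm\onebf_I$. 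You would need to supply such an argument for each case before the proposal could be counted as a proof; and even then it obscures the key structural point, namely that the hypothesis already certifies minimality of $\|y-B(\floor{z}+v)\|$, so matching that value is sufficient.
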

\begin{proof}
The lemma is trivial if $\subrng(v) = \emptyset$ so that $\decrng(v) = v$.  It remains to prove the lemma when $\subrng(v) \neq \emptyset$.  In this case put $S = \subrng(v)$ and put 
\[
u = \decrng(v) = v - \onebf_S.
\] 
Let $\zeta = z - \floor{z}$ be the column vector containing the fractional parts of the elements of $z$.  We have $\zeta - u = \zeta - v + \onebf_S$.  Applying part~\ref{eq:lem:decSellingsinc} of Lemma~\ref{lem:decSellings} with $p = \zeta - v$ we obtain
\begin{align}
\|B(\zeta - v)\|^2 - \|B(\zeta - u)\|^2 &= \Phi(S, \zeta-v) \nonumber \\
&= \sum_{i \in S}\sum_{j \notin S}q_{ij}\big(1 + 2(\zeta_i-\zeta_j) - 2(v_i - v_j)\big). \label{eq:sumsumBzBu}
\end{align}
Observe that $\zeta_i =  z_i - \floor{z_i} \in [0,1)$ for all $i=1,\dots,n+1$ and so $-1 < \zeta_i-\zeta_j < 1$ for all $i,j=1,\dots,n+1$.  Also, for $i \in S$ and $j\notin S$ we have 
\[
v_i - v_j \geq \min\{ v_i, i \in S\} - \max\{v_j, j \notin S\} \geq 2
\]
by definition of $\subrng(v) = S$.  Thus,
\[
1 + 2(\zeta_i-\zeta_j) - 2(v_i - v_j) < 1 + 2 - 4 = -1 < 0 \qquad \text{for $i \in S$ and $j \notin S$}.
\]
Substituting this inequality into~\eqref{eq:sumsumBzBu} and using that $q_{ij} \leq 0$ for $i \neq j$ (the obtuse condition~\eqref{eq:obtusecond}) we find that
\[
\|B(z - \floor{z} - v)\|^2 - \|B(z - \floor{z} - u)\|^2 \geq 0.
\]
It follows that $B(\floor{z} + u) = B\big(\floor{z} + \decrng(v)\big)$ is a closest lattice point to $y = Bz$ whenever $B(\floor{z} + v)$ is.
\end{proof}

\begin{lemma}\label{lem:roundzclose}
There exists a closest lattice point to $y = Bz$ in the form $B(\floor{z} + v)$ where $v \in \ints^{n+1}$ with $\rng(v) \leq n$.
\end{lemma}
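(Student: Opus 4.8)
The plan is to begin with an arbitrary closest lattice point, express it relative to the basepoint $\floor{z}$, and then repeatedly apply the map $\decrng$ until no further reduction is possible. Lemma~\ref{lem:decrngpreservesclosestpoints} guarantees that each application keeps us at a closest lattice point, Proposition~\ref{lem:repeatappdecrange} guarantees that the iteration stabilises after finitely many steps, and Proposition~\ref{prop:subrrngsmall} converts the stabilisation condition $\subrng(\cdot)=\emptyset$ into the bound $\rng(\cdot)\le n$.

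First I would check that every closest lattice point to $y=Bz$ can be written as $B(\floor{z}+v_0)$ for some $v_0\in\ints^{n+1}$. Any $x\in\Lambda$ has the form $x=\sum_{i=1}^n b_i u_i = Bu$ with $u=(u_1,\dots,u_n,0)^\prime\in\ints^{n+1}$, and since $\floor{z}\in\ints^{n+1}$ the vector $v_0=u-\floor{z}$ is an integer vector satisfying $Bu=B(\floor{z}+v_0)$. Pick such a $v_0$ corresponding to some closest lattice point.

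Next, define the sequence $d_0=v_0$ and $d_{k+1}=\decrng(d_k)$ as in Proposition~\ref{lem:repeatappdecrange}. By that proposition there is a finite $T$ with $d_T=d_{T+1}=\decrng(d_T)$, which by the definition of $\decrng$ means $\subrng(d_T)=\emptyset$. An easy induction on $k$, with base case the choice of $v_0$ and inductive step supplied by Lemma~\ref{lem:decrngpreservesclosestpoints}, shows that $B(\floor{z}+d_k)$ is a closest lattice point to $y$ for every $k$; in particular $v:=d_T$ gives a closest lattice point $B(\floor{z}+v)$. Since each $\decrng$ step subtracts an integer vector, $v\in\ints^{n+1}$, and since $\subrng(v)=\emptyset$, Proposition~\ref{prop:subrrngsmall} yields $\rng(v)\le n$, which is exactly the claim.

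I do not expect a genuine obstacle here: the argument is essentially the concatenation of the three preceding results. The only points requiring a little care are the very first reduction — confirming that an arbitrary closest lattice point admits an integer offset from $\floor{z}$, which uses that (without loss of generality) $y$ lies in the span of $b_1,\dots,b_n$ — and verifying that the induction genuinely closes, i.e. that the hypothesis of Lemma~\ref{lem:decrngpreservesclosestpoints} is met at every stage. This holds because that lemma's conclusion is again a closest lattice point of the form $B(\floor{z}+(\text{integer vector}))$, so it can be fed back into itself.
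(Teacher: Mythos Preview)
Your proposal is correct and follows essentially the same argument as the paper: start from an arbitrary closest lattice point written as $B(\floor{z}+d_0)$, iterate $\decrng$, use Lemma~\ref{lem:decrngpreservesclosestpoints} for preservation, Proposition~\ref{lem:repeatappdecrange} for termination, and Proposition~\ref{prop:subrrngsmall} for the bound $\rng(v)\le n$. The only difference is that you spell out explicitly why the initial $d_0$ exists, which the paper leaves implicit.
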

\begin{proof}
Let $d_0 \in \ints^{n+1}$ be such that $B(\floor{z} + d_0)$ is a closest lattice point to $y$. Define the sequence of vectors $d_0,d_1,\dots$ from $\ints^{n+1}$ according to the recursion $d_{k+1} = \decrng(d_k)$.  It follows from Lemma~\ref{lem:decrngpreservesclosestpoints} that $B(\floor{z} + d_{k})$ is a closest lattice point for all positive integers $k$.  By Proposition~\ref{lem:repeatappdecrange} there is a finite $T$ such that 
\[
d_{T+1}=d_T=\decrng(d_T).
\]  
Thus $\subrng(d_T) = \emptyset$ and $\rng(d_T) \leq n$ by Proposition~\ref{prop:subrrngsmall}.  The proof follows with $v = d_T$.   
\end{proof}

Let $\ell$ be a nonegative integer.  We say that a lattice point $x$ is $\ell$-\emph{close} to $y$ if there exists a $v \in \ints^{n+1}$ with $\rng(v) = \ell$ such that $x + Bv$ is a closest lattice point to $y$.  Lemma~\ref{lem:roundzclose} asserts that the lattice point $x_0 = B\floor{z}$ that initialises the iterative procedure~\eqref{eq:xseqfirsttype} is $K$-close to $y$ where $K \leq n$.  From Lemma~\ref{lem:rngdecreases} stated below it will follow that if the lattice point $x_k$ obtained on the $k$th iteration of the procedure is $\ell$-close, then the lattice point $x_{k+1}$ obtained on the next iteration is $(\ell-1)$-close.  Since $x_0$ is $K$-close it will then follow that after $K \leq n$ iterations the lattice point $x_K$ is $0$-close.  At this stage it is guaranteed that $x_{K}$ itself is a closest lattice point to $y$.  This is shown in the following lemma.  

\begin{lemma}\label{lem:rngzeroclosestpoint}
If $x$ is a lattice point that is $0$-close to $y$, then $x$ is a closest lattice point to $y$.
\end{lemma}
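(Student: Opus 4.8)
The plan is to unwind the definition of $0$-closeness and exploit the superbasis condition~\eqref{eq:superbasecond}. By definition of $\ell$-closeness, since $x$ is $0$-close to $y$ there exists a vector $v \in \ints^{n+1}$ with $\rng(v) = 0$ such that $x + Bv$ is a closest lattice point to $y$.

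The key observation is that $\rng(v) = 0$ forces $\max(v) = \min(v)$, so every element of $v$ equals a common integer, say $c$; that is, $v = c\onebf$ where $\onebf$ is the all-ones column vector of length $n+1$. The superbasis condition~\eqref{eq:superbasecond} gives $B\onebf = \sum_{i=1}^{n+1} b_i = \zerobf$, hence $Bv = cB\onebf = \zerobf$. Therefore $x + Bv = x$, and since $x + Bv$ is a closest lattice point to $y$, so is $x$. This completes the proof.

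There is essentially no obstacle here: the statement is an immediate corollary of the fact that vectors of the form $c\onebf$ lie in the kernel of $B$, which is exactly the superbasis condition. I would keep the proof to two or three sentences. (Its role in the paper is purely as the base case of the inductive argument sketched in the surrounding text, where Lemma~\ref{lem:rngdecreases} is used to drive $\ell$ down to $0$.)
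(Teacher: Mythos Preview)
Your proof is correct and is essentially identical to the paper's own argument: unwind the definition of $0$-close, note that $\rng(v)=0$ forces $v=c\onebf$, and use the superbasis condition $B\onebf=\zerobf$ to conclude $x+Bv=x$. There is nothing to add.
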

\begin{proof}
Because $x$ is $0$-close there exists a $v \in \ints^{n+1}$ with $\rng(v) = 0$ such that $x + Bv$ is a closest lattice point to $y$.  Because $\rng(v) = 0$ all elements from $v$ are identical, that is, $v_1=v_2=\dots=v_{n+1}$.  In this case $Bv = \sum_{i=1}^{n+1} v_n b_n = v_1\sum_{i=1}^{n+1}b_n = 0$
as a result of the superbasis condition~\eqref{eq:superbasecond}.  Thus $x = x + Bv$ is a closest point to $y$. 
\end{proof}

Before giving the proof of Lemma~\ref{lem:rngdecreases} we require the following simple result.

\begin{lemma}\label{eq:integergreaterless}
Let $h \in \{0,1\}^{n+1}$ and $v \in \ints^{n+1}$.  Suppose that
$h_i = 0$ whenever $v_i = \min(v)$ and that $h_i = 1$ whenever $v_i = \max(v)$.  Then
\[
h_i - h_j \leq v_i - v_j
\]
when either $v_i = \max(v)$ or $v_j = \min(v)$.
\end{lemma}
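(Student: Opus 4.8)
The plan is to argue by a short case analysis on which of the two hypotheses applies, exploiting just two elementary facts: since $h \in \{0,1\}^{n+1}$ we have $h_i - h_j \in \{-1,0,1\}$, so in particular $h_i - h_j \leq 1$ always; and since $v \in \ints^{n+1}$, any strict inequality $v_i > v_j$ between coordinates of $v$ is a gap of at least one, i.e.\ $v_i - v_j \geq 1$. These are the only ingredients needed.

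First I would treat the case $v_i = \max(v)$. The hypothesis then gives $h_i = 1$, so $h_i - h_j = 1 - h_j \leq 1$. If moreover $v_j = \max(v)$, the hypothesis also gives $h_j = 1$, whence $h_i - h_j = 0 = v_i - v_j$ and the claim holds with equality. Otherwise $v_j < \max(v) = v_i$, and integrality gives $v_i - v_j \geq 1 \geq h_i - h_j$. Either way the asserted inequality holds.

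Next I would treat the case $v_j = \min(v)$, which is symmetric. Here the hypothesis gives $h_j = 0$, so $h_i - h_j = h_i \leq 1$. If also $v_i = \min(v)$, then $h_i = 0$ and both sides equal $0$; otherwise $v_i > \min(v) = v_j$ and integrality gives $v_i - v_j \geq 1 \geq h_i = h_i - h_j$. Since the statement assumes $v_i = \max(v)$ \emph{or} $v_j = \min(v)$, these two cases exhaust all possibilities (they may overlap, in which case either argument applies), and the proof is complete.

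I do not expect any genuine obstacle here; the one point to be careful about is not to overlook the sub-case in which the ``other'' index also attains the relevant extreme value. In that sub-case the right-hand side $v_i - v_j$ is $0$, so one cannot appeal to the integer-gap bound and must instead use both pinned-down values of $h$ supplied by the hypotheses.
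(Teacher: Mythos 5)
Your proof is correct and follows essentially the same case analysis as the paper: split on whether $v_i = \max(v)$ or $v_j = \min(v)$, pin down the corresponding coordinate of $h$, and then handle the sub-case where the other index also attains the extreme (forcing equality) separately from the generic sub-case, where integrality of $v$ gives a gap of at least $1 \geq h_i - h_j$. Incidentally, your write-up even corrects a small typo in the paper's version, which writes $\min(v) - v_j$ where $v_i - v_j$ (equal to $0$ in that sub-case) is meant.
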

\begin{proof}
If $v_i = \max(v)$ then $h_i = 1$ and we need only show that $1-h_j \leq \max(v) - v_j$ for all $j$.  If $v_j = \max(v)$ then $h_j = 1$ and the results holds since $1 - h_j = 0 = \min(v) - v_j$.  Otherwise if $v_j < \max(v)$ then $1-h_j \leq 1 \leq \max(v) - v_j$ because $\max(v)$ and $v_j$ are integers.

Now, if $v_j = \min(v)$ then $h_j = 0$ and we need only show that $h_i \leq v_i - \min(v)$ for all $i$.  If $v_i = \min(v)$ then $h_i = 0 = v_i - \min(v)$ and the results holds.  Otherwise if $v_i > \min(v)$ then $h_i \leq 1 \leq v_i - \min(v)$ because $\min(v)$ and $v_i$ are integers.
\end{proof}

\begin{lemma}\label{lem:rngdecreases}
Let $Bu$ with $u \in \ints^{n+1}$ be a lattice point that is $\ell$-close to $y = Bz$ where $\ell > 0$.  Let $g \in \{0,1\}^{n+1}$ be such that
\begin{equation}\label{eq:qismin}
\|B(z - u - g)\|^2 = \min_{t \in \{0,1\}^{n+1}}\|B(z - u - t)\|^2.
\end{equation}
The lattice point $B(u+g)$ is $(\ell-1)$-close to $y$.
\end{lemma}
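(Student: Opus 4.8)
The plan is to manufacture a witness for the $(\ell-1)$-closeness of $B(u+g)$ out of a witness for the $\ell$-closeness of $Bu$. Choose $v\in\ints^{n+1}$ with $\rng(v)=\ell$ such that $B(u+v)$ is a closest lattice point to $y=Bz$. Adding an integer multiple of $\onebf$ to $v$ changes neither $\rng(v)$ nor $B(u+v)$, since $B\onebf=\zerobf$ by the superbasis condition~\eqref{eq:superbasecond}, so I normalise to $\min(v)=0$, whence $\max(v)=\ell\ge 1$. Since $B\big(u+g+(v-g)\big)=B(u+v)$ is a closest point, the vector $v-g$ is a witness to the closeness of $B(u+g)$, and the lemma reduces to the single claim $\rng(v-g)=\ell-1$.

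This claim follows once $g$ is known to be \emph{compatible} with $v$, meaning that $g_i=1$ whenever $v_i=\max(v)$ and $g_i=0$ whenever $v_i=\min(v)$. Indeed, \Lem{eq:integergreaterless} applied with $h=g$ then gives $g_i-g_j\le v_i-v_j$ whenever $v_i=\max(v)$ or $v_j=\min(v)$. Taking $i$ to realise $\max(v)$ gives $(v-g)_i=\max(v)-1$, while the inequality forces $(v-g)_j\le\max(v)-1$ for every $j$, so $\max(v-g)=\max(v)-1$; a symmetric argument gives $\min(v-g)=\min(v)$, and hence $\rng(v-g)=\rng(v)-1=\ell-1$.

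The real work, and where I expect the main obstacle, is establishing compatibility. Both vectors are optimal in their respective domains: $g$ minimises $\|B(z-u-t)\|^2$ over $t\in\{0,1\}^{n+1}$, while $v$ minimises it over $t\in\ints^{n+1}$ --- equivalently, by \Thm{thm:revvecssuperbase} and~\eqref{eq:relvectnearpointieq}, $\|B(z-u-v)\|^2$ cannot be decreased by adding or subtracting any indicator vector $\onebf_S$. Suppose $g$ violates compatibility on the bottom, so that $S_1=\{i:v_i=0,\ g_i=1\}$ is nonempty. Then $g-\onebf_{S_1}\in\{0,1\}^{n+1}$ and $v+\onebf_{S_1}\in\ints^{n+1}$, and since $(g-\onebf_{S_1})+(v+\onebf_{S_1})=g+v$, the parallelogram law for $\|B\cdot\|$ gives
\begin{multline*}
L:=\Big(\|B(z-u-g+\onebf_{S_1})\|^2-\|B(z-u-g)\|^2\Big)\\
+\Big(\|B(z-u-v-\onebf_{S_1})\|^2-\|B(z-u-v)\|^2\Big)\\
=2\langle B(v-g),\,B\onebf_{S_1}\rangle+2\|B\onebf_{S_1}\|^2.
\end{multline*}
Now $L\ge 0$, as a sum of two nonnegative decrements. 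On the other hand, expanding the inner product through the Selling parameters $q_{ij}$ and using $v_j-g_j=-1$ for $j\in S_1$ rewrites the right-hand side as $2\sum_{i\in S_1}\sum_{j\notin S_1}q_{ij}(v_j-g_j)$, in which every summand is nonpositive: $q_{ij}\le 0$ by the obtuse condition~\eqref{eq:obtusecond}, and $v_j-g_j\ge 0$ for $j\notin S_1$ (trivially if $g_j=0$, and because $v_j\ge 1$ if $g_j=1$, since then $j\notin S_1$ forces $v_j\ne 0$). Hence $L\le 0$, so $L=0$ and each decrement vanishes: $g-\onebf_{S_1}$ is again a minimiser and $v+\onebf_{S_1}$ is again a witness, now agreeing with $g$ on strictly more bottom coordinates. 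Iterating this, together with the mirror-image argument on the top set $S_2=\{i:v_i=\max(v),\ g_i=0\}$ --- where the identity $\sum_j q_{ij}=0$ (a consequence of $B\onebf=\zerobf$) takes over the role played by the bound $v_j-g_j\ge 0$ on the bottom --- produces, after finitely many steps, a witness compatible with $g$.

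Two residual points will need care rather than ideas. A replacement witness $v+\onebf_{S_1}$ can have range $\ell-1$ instead of $\ell$ (exactly when every bottom coordinate of $v$ lies in $S_1$), in which case $Bu$ is already $(\ell-1)$-close and one simply restarts with the smaller parameter, a recursion of depth at most $\ell$. And non-uniqueness of $g$ is harmless: the iteration may use any minimiser, and all minimisers yield points equidistant from $y$, so it is enough to carry the argument through for the compatible minimiser obtained above.
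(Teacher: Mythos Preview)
Your core computation (the $L=0$ parallelogram argument showing that $v+\onebf_{S_1}$ remains a witness) is correct and elegant, and in fact your modified witness $v+\onebf_{S_1}-\onebf_{S_2}$ coincides with the paper's $g+w$, so the target object is the same. But your handling of the ``restart'' case is a genuine gap. When every bottom coordinate lies in $S_1$, you note that $Bu$ becomes $(\ell-1)$-close and propose to recurse on the smaller parameter. That recursion, however, proves only that $B(u+g)$ is $(\ell-2)$-close, not $(\ell-1)$-close --- and with the paper's definition, $m$-close does \emph{not} imply $(m+1)$-close (a witness $w$ with $\rng(w)=m+1$ and $Bw=0$ would have to be a multiple of $\onebf$, hence have range~$0$). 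So the lemma as stated is not established. Your final remark that ``it is enough to carry the argument through for the compatible minimiser obtained above'' compounds the problem: the lemma asserts $(\ell-1)$-closeness of $B(u+g)$ for an \emph{arbitrary} minimiser $g$, and replacing $g$ by $g-\onebf_{S_1}$ changes the lattice point $B(u+g)$.

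The fix is short and already implicit in what you wrote: do \emph{exactly one} bottom step and \emph{exactly one} top step, in either order, and then stop --- no iteration, no recursion. With $v'' = v+\onebf_{S_1}-\onebf_{S_2}$ (where $S_1=T$ and $S_2=S$ in the paper's notation), two applications of your $L=0$ argument show $B(u+v'')$ is closest, and a direct coordinate check gives $\rng(v''-g)=\ell-1$ unconditionally: bottom coordinates of $v$ contribute $0$ to $v''-g$ (whether or not they lie in $S_1$), top coordinates contribute $\ell-1$ (whether or not they lie in $S_2$), and the rest lie in $[0,\ell-1]$. This is exactly the paper's $w=v-h$; the paper reaches the same conclusion via a single contradiction argument using the $\Phi$ identities rather than your two $L=0$ steps, but the witness produced is identical.
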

\begin{proof}
Because $Bu$ is $\ell$-close to $y$ there exists $v \in \ints^{n+1}$ with $\rng(v) = \ell$ such that $B(u+v)$ is a closest lattice point to $y$.  Define subsets of indices
\[
S = \{i \mid g_i = 0, v_i = \max(v) \}, \qquad T = \{i \mid g_i = 1, v_i = \min(v) \}, 
\]
and put $h = g + \onebf_S - \onebf_T \in \{0,1\}^{n+1}$ and $w = v - h$.  Observe that $h_i = 0$ whenever $v_i = \min(v)$ and so $\min(w) = \min(v-h) = \min(v)$.  Also, $h_i = 1$ whenever $v_i = \max(v)$ and so $\max(w) = \max(v-h) = \max(v) - 1$.  Thus,
\[
\rng(w) = \max(v) - 1 - \min(v) = \rng(v) - 1 = \ell - 1.
\]
The lemma will follow if we show that $B(u+g+w)$ is a closest lattice point to $y$ since then the lattice point $B(u+g)$ with be $(\ell-1)$-close to $y$.  The proof is by contradiction.  Suppose $B(u+g+w)$ is not a closest point to $y$, that is, suppose
\[
\|B(z-u-g-w)\|^2 > |B(z-u-v)\|^2.
\]
Putting $p = z-u-v$ we have
\[
\|B(p+\onebf_S-\onebf_T)\|^2 > \|Bp\|^2.
\]
By part~\ref{eq:lem:decSellingsincdec} of Lemma~\ref{lem:decSellings} we obtain
\begin{equation}\label{eq:BpBpdecincineq}
\|Bp\|^2 - \|B(p+\onebf_S-\onebf_T)\|^2 = \Phi(S,p) + \Phi(\bar{T}, p) + \sum_{i\in S}\sum_{j \in S} q_{ij} < 0.
\end{equation}

As stated already $h_i = 0$ whenever $v_i = \min(v)$ and $h_i = 1$ whenever $v_i = \max(v)$.  It follows from Lemma~\ref{eq:integergreaterless} that
\begin{equation}\label{eq:hijvijieq}
h_{i} - h_j \leq v_{i} - v_j
\end{equation}
when either $v_i = \max(v)$ or $v_j = \min(v)$.  Since $v_i = \max(v)$ for $i \in S$ and $v_j = \min(v)$ for $j \in T$ the inequality~\eqref{eq:hijvijieq} holds when either $i \in S$ or $j \in T$.

Put $r = z - u - h$.  By~\eqref{eq:hijvijieq} we have $r_i - r_j \geq p_{i} - p_{j}$ when either $i \in S$ or $j \in T$.  Now, since $q_{ij} \leq 0$ for $i \neq j$,
\[
\Phi(S,r) = \sum_{i \in S}\sum_{j \notin S}q_{ij}(1 + 2r_i - 2r_j) \leq \sum_{i \in S}\sum_{j \notin S}q_{ij}(1 + 2p_i - 2p_j) = \Phi(S,p),
\]
and
\[
\Phi(\bar{T},r) = \sum_{i \notin T}\sum_{j \in T}q_{ij}(1 + 2r_i - 2r_j) \leq \sum_{i \notin T}\sum_{j \in T}q_{ij}(1 + 2p_i - 2p_j) = \Phi(\bar{T},p).
\]
Using part~\ref{eq:lem:decSellingsincdec} of Lemma~\ref{lem:decSellings} again,
\begin{align*}
\|B(z-u-h)\|^2 - \|B(z-u-g)\|^2 &= \|Br\|^2 - \|B(r+\onebf_S - \onebf_T)\|^2 \\
&= \Phi(S,r) + \Phi(\bar{T},r) + \sum_{i\in S}\sum_{j \in S} q_{ij} \\
&\leq \Phi(S,p) + \Phi(\bar{T}, p) + \sum_{i\in S}\sum_{j \in S} q_{ij} < 0 
\end{align*}
as a result of~\eqref{eq:BpBpdecincineq}.  However, $h \in \{0,1\}^{n+1}$ and so this implies
\[
\|B(z-u-g)\|^2 > \|B(z-u-h)\|^2 \geq \min_{t \in \{0,1\}^{n+1}}\|B(z - u - t)\|^2
\]
contradicting~\eqref{eq:qismin}.  Thus, our original supposition is false and $B(u+g+w)$ is a closest lattice point to $y$. Because $\rng(w) = \ell-1$ the lattice point $B(u+g)$ is $(\ell-1)$-close to $y$.
 \end{proof}

The next theorem asserts that the iterative procedure~\eqref{eq:xseqfirsttype} converges to a closest lattice point in $K \leq n$ iterations.  This is the primary result of this section.

\begin{theorem}
Let $x_0,x_1,\dots$ be the sequence of lattice points given by the iterative procedure~\eqref{eq:xseqfirsttype}.  There exists $K \leq n$ such that $x_K$ is a closest lattice point to $y = Bz$.
\end{theorem}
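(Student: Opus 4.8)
The plan is to combine the three preceding lemmas into a short finite induction on the iteration index $k$. First I would invoke Lemma~\ref{lem:roundzclose} to assert that the initial lattice point $x_0 = B\floor{z} = Bu_0$ is $K$-close to $y$ for some integer $K$ with $0 \leq K \leq n$: that lemma produces a closest lattice point of the form $B(\floor{z} + v)$ with $v \in \ints^{n+1}$ and $\rng(v) \leq n$, and one simply takes $K := \rng(v)$.

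Next I would observe that a single step of the iterative procedure~\eqref{eq:xseqfirsttype} is exactly the situation treated by Lemma~\ref{lem:rngdecreases}. With $u = u_k$, the defining property~\eqref{eq:pvecmin} of $t_k$ is precisely the minimality hypothesis~\eqref{eq:qismin} imposed on $g$ there, so (whenever the current closeness parameter is positive) the lemma tells us that if $x_k = Bu_k$ is $\ell$-close to $y$ with $\ell > 0$, then $x_{k+1} = B(u_k + t_k) = Bu_{k+1}$ is $(\ell-1)$-close to $y$. The possible non-uniqueness of the minimiser $t_k$ is harmless, since Lemma~\ref{lem:rngdecreases} applies to any minimiser $g$.

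The induction then finishes the argument: since $x_0$ is $K$-close and $K - k > 0$ for every $k < K$, repeated application of Lemma~\ref{lem:rngdecreases} gives that $x_k$ is $(K-k)$-close to $y$ for each $k = 0, 1, \dots, K$. In particular $x_K$ is $0$-close to $y$, and Lemma~\ref{lem:rngzeroclosestpoint} states that a lattice point that is $0$-close to $y$ is itself a closest lattice point to $y$. Since $K \leq n$, the claimed bound follows, and by Proposition~\ref{obs:1} (or Proposition~\ref{obs:2}) the subsequent iterates $x_{K+1}, x_{K+2}, \dots$ remain closest points as well.

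There is essentially no serious obstacle left once Lemmas~\ref{lem:roundzclose}, \ref{lem:rngdecreases}, and~\ref{lem:rngzeroclosestpoint} are in hand; the only point needing a little care is to verify that the closeness parameter stays strictly positive at every step strictly before $K$, so that Lemma~\ref{lem:rngdecreases} is indeed applicable at each such step and the induction proceeds without interruption.
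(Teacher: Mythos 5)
Your proof is correct and follows essentially the same route as the paper's: invoke Lemma~\ref{lem:roundzclose} to get that $x_0 = B\floor{z}$ is $K$-close with $K \leq n$, apply Lemma~\ref{lem:rngdecreases} inductively to conclude $x_k$ is $(K-k)$-close for $k = 0,\dots,K$, and finish with Lemma~\ref{lem:rngzeroclosestpoint}. The extra remarks you add (that non-uniqueness of $t_k$ is harmless since Lemma~\ref{lem:rngdecreases} applies to any minimiser, and that iterates beyond $x_K$ remain closest points) are valid but not needed for the stated claim.
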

\begin{proof}
Let $x_k = B u_k$ be the lattice point obtained on the $k$th iteration of the procedure.  Suppose that $x_k$ is $\ell$-close to $y=Bz$ with $\ell > 0$.  The procedure computes $t_{k} \in \{0,1\}^{n+1}$ satisfying
\[
\|B(z - u_k - t_{k})\|^2 = \min_{t \in \{0,1\}^{n+1}}\|B(z - u_k - t)\|^2
\]
and puts $x_{k+1} = B(u_k + t_k)$.  It follows from Lemma~\ref{lem:rngdecreases} that $x_{k+1}$ is $(\ell-1)$-close to $y$.  By Lemma~\ref{lem:roundzclose} the lattice point that initialises the procedure $x_0 = B \floor{z}$ is $K$-close to $y$ where $K \leq n$.  Thus, $x_1$ is $(K-1)$-close, $x_2$ is $(K-2)$-close, and so on until $x_K$ is $0$-close.  That $x_K$ is a closest lattice point to $y$ follows from Lemma~\ref{lem:rngzeroclosestpoint}.
\end{proof}

\section{Computing a closest relevant vector}\label{sec:comp-clos-relev}

In the previous section we showed that the iterative procedure~\eqref{eq:xseqfirsttype} results in a closest lattice point in at most $n$ iterations.  It remains to show that each iteration of the procedure can be computed efficiently.  Specifically, it remains to show that the minimisation over the set of binary vectors $\{0,1\}^{n+1}$ described in~\eqref{eq:pvecmin} can be computed efficiently.  Putting $p = z - u_k$ in~\eqref{eq:pvecmin} we require an efficient method to compute a $t \in \{0,1\}^{n+1}$ such that the binary quadratic form
\[
\| B(p - t) \|^2 = \| \sum_{i=1}^{n+1} b_i (p_i - t_i) \|^2
\]
is minimised.  Expanding this quadratic form gives
\[
\| \sum_{i=1}^{n+1} b_i (p_i - t_i) \|^2 =  \sum_{i=1}^{n+1}\sum_{j=1}^{n+1} q_{ij}p_i p_j -  2\sum_{i=1}^{n+1}\sum_{j=1}^{n+1} q_{ij}p_j t_i + \sum_{i=1}^{n+1}\sum_{j=1}^{n+1} q_{ij} t_i t_j.
\]
The first sum above is independent of $t$ and can be ignored for the purpose of minimisation.  Letting $s_i = \sum_{j=1}^{n+1} q_{ij}p_j$, we can equivalently minimise the binary quadratic form
\begin{equation}\label{eq:quadformnp}
Q(t) = \sum_{i=1}^{n+1} s_i t_i + \sum_{i=1}^{n+1}\sum_{j=1}^{n+1} q_{ij} t_i t_j.
\end{equation}
We will show that a minimiser of $Q(t)$ can be found efficiently be computing a minimum cut in an undirected flow network.
%\section{Quadratic $\{0,1\}$ programs and minimium cuts in graphs}
%This section describes how a binary quadratic program can be mapped into the problem of computing a minimum cut in a weighted graph.  
This technique has appeared previously~\cite{Picard_min_cuts_1974,Sankaran_solving_CDMA_mincut_1998,Ulukus_cdma_mincut_1998,Cormen2001} but we include the derivation here so that this paper is self contained.

Let $G$ be an undirected graph with $n+3$ vertices $v_0, \dots, v_{n+2}$ contained in the set $V$ and edges $e_{ij}$ connecting $v_i$ to $v_j$.  To each edge we assign a \emph{weight} $w_{ij} \in \reals$.  The graph is undirected so the weights are symmetric, that is, $w_{ij} = w_{ji}$.  By calling the vertex $v_0$ the \emph{source} and the vertex $v_{n+2}$ the \emph{sink} the graph $G$ is what is called a \emph{flow network}.  The flow network is \emph{undirected} since the weights assigned to each edge are undirected.  A \emph{cut} in the flow network $G$ is a subset $C \subset V$ of vertices with its complement $\bar{C} \subset V$ such that the source vertex $v_0 \in C$ and the sink vertex $v_{n+2} \in \bar{C}$.  %The vertex $v_0$ is typically called the \emph{source} and the vertex $v_{n+1}$ is typically called the \emph{sink}.

The weight of a cut is
\[
W(C,\bar{C}) = \sum_{i \in I} \sum_{j \in J} w_{ij}, 
\]
where $I = \{ i \mid v_i \in C\}$ and $J = \{j \mid v_j \in \bar{C}\}$.  That is, $W(C,\bar{C})$ is the sum of the weights on the edges crossing from the vertices in $C$ to the vertices in $\bar{C}$.  In what follows we will often drop the argument and write $W$ rather than $W(C,\bar{C})$.  A \emph{minimum cut} is a $C$ and $\bar{C}$ that minimise the weight $W$.  If all of the edge weights $w_{ij}$ for $i \neq j$ are nonnegative, a minimum cut can be computed in 
order $O(n^3)$ arithmetic operations~\cite{Cormen2001,Even_graph_algorithms_1979}.

We require some properties of the weights $w_{ij}$ in relation to $W$.  If the graph is allowed to contain loops, that is, edges from a vertex to itself, then the weight of these edges $w_{ii}$ have no effect on the weight of any cut.  We may choose any values for the $w_{ii}$ without affecting $W$.  We will find it convenient to set $w_{0,0} = w_{n+2,n+2} = 0$.  The remaining $w_{ii}$ we shall specify shortly.  The edge $e_{0,n+2}$ is in every cut.  If a constant is added to the weight of this edge, that is, $w_{0,n+2}$ is replaced by $w_{0,n+2} + c$ then $W$ is replaced by $W + c$ for every $C$ and $\bar{C}$.  In particular, the subsets $C$ and $\bar{C}$ corresponding to a minimum cut are not changed.  We will find it convenient to choose $w_{0,n+2} = w_{n+2,0} = 0$.  

If vertex $v_i$ is in $C$ then edge $e_{i,n+2}$ contributes to the weight of the cut.  If $v_i \notin C$, i.e., $v_i \in \bar{C}$, then edge $e_{0,i}$ contributes to the weight of the cut.  So, either $e_{0,i}$ or $e_{i,n+2}$ \emph{but not both} contribute to every cut.  If a constant, say $c$, is added to the weights of these edges, that is, $w_{0,i}$ and $w_{i,n+2}$ are replaced by $w_{0,i} + c$ and $w_{i,n+2} + c$, then $W$ is replaced by $W + c$ for every $C$ and $\bar{C}$.  The $C$ and $\bar{C}$ corresponding to a minimum cut are unchanged.  In this way, the minimum cut is only affected by the differences 
\[
d_i = w_{i,n+2} - w_{0,i}
\]
for each $i$ and not the specific values of the weights $w_{i,n+2}$ and $w_{0,i}$.  %For the purpose of computing the minimum cut it is necessary to choose the difference so that the weights $w_{0,i}$ and $w_{i,n+1}$ are nonnegative.

% It is convenient to transform $W(C)$ as follows.  Let $D = C \backslash 0$ and $\bar{D} = \bar{C} \backslash (n+1)$ where $C \backslash 0$ denotes the set $C$ with the element $0$ removed and $\bar{C} \backslash (n+1)$ denotes the set $\bar{C}$ with the element $n+1$ removed.  Now we can write the weight as,
% \[
% W(C) =  \sum_{i \in D}w_{i,n+1} + \sum_{j \in \bar{d}}w_{0,j} +  \sum_{i \in D} \sum_{j \in \bar{D}} w_{ij}
% \]
% becuase $w_{0,0},w_{0,n+2}$ and $w_{n+2,n+2}$ are zero. 

We now show how $W(C,\bar{C})$ can be represented as a binary quadratic form.  Put $t_0 = 1$ and $t_{n+2} = 0$ and
\[
t_i = \begin{cases}
1, & i \in C \\
0, & i \in \bar{C}
\end{cases}
\]
for $i = 1,2,\dots,n+1$.  Observe that
\[
t_i(1 - t_j) = \begin{cases}
1, & i \in C, j \in \bar{C} \\
0, & \text{otherwise}.
\end{cases}
\]
The weight can now be written as
\begin{align*}
W(C,\bar{C}) = \sum_{i \in C} \sum_{j \in \bar{C}} w_{ij} = \sum_{i =0}^{n+2} \sum_{j =0}^{n+2} w_{ij} t_i (1 - t_j) = F(t),
\end{align*}
say.  Finding a minimum cut is equivalent to finding the binary vector $t = (t_1, \dots, t_{n+1})$ that minimises $F(t)$.  Write,
\[
F(t) =  \sum_{i=0}^{n+2} \sum_{j =0}^{n+2} w_{ij}t_i - \sum_{i=0}^{n+2} \sum_{j =0}^{n+2} w_{ij} t_it_j.
\]
Letting $k_i = \sum_{j =0}^{n+2} w_{ij}$, and using that $t_0 = 1$ and $t_{n+2} = 0$,
\[
F(t) = \sum_{i=0}^{n+1}k_it_i  - w_{00} - \sum_{i=1}^{n+1} w_{i0} t_i - \sum_{j=1}^{n+1} w_{0j} t_j - \sum_{i=1}^{n+1} \sum_{j =1}^{n+1} w_{ij} t_it_j.
\]
Because $w_{00} = 0$ and $w_{ij} = w_{ji}$ we have
\[
F(t) = k_0 + \sum_{i=1}^{n+1}( k_i  - 2 w_{i0}) t_i  - \sum_{i=1}^{n+1} \sum_{j =1}^{n+1} w_{ij} t_it_j.
\]
The constant term $k_0$ is unimportant for the purpose of minimisation so finding a minimum cut is equivalent to minimising the binary quadratic form
\[
\sum_{i=1}^{n+1}g_i t_i  - \sum_{i=1}^{n+1} \sum_{j =1}^{n+1} w_{ij} t_it_j,
\]
where $g_i = k_i  - 2 w_{i0} = d_i + \sum_{j=1}^{n+1} w_{ij}$.  It only remains to observe the equivalence of this quadratic form and $Q(t)$ from~\eqref{eq:quadformnp} when the weights are assigned to satisfy,
\begin{align*}
&q_{ij} = - w_{ij} \qquad i,j = 1,\dots,n+1 \\
&s_i = g_i = d_i + \sum_{j=1}^{n+1} w_{ij}.
\end{align*}
Because the $q_{ij}$ are nonpositive for $i \neq j$ the weights $w_{ij}$ are nonnegative for all $i \neq j$ with $i,j = 1,\dots,n+1$.  As discussed the value of the weights $w_{ii}$ have no effect on the weight of any cut $W$ so setting $q_{ii} = - w_{ii}$ for  $i = 1,\dots,n+1$ is of no consequence.  Finally the weights $w_{i,n+2}$ and $w_{0,i}$ can be chosen so that both are nonnegative and 
\[
w_{i,n+2} - w_{0,i} = d_i = s_i + \sum_{j=1}^{n+1} q_{ij} = s_i
\]  
because $\sum_{j=1}^{n+1} q_{ij} = 0$ due to the superbase condition~\eqref{eq:superbasecond}.  That is, we choose $w_{i,n+2} = s_i$ and $w_{0,i} = 0$ when $s_i \geq 0$ and $w_{i,n+2}=0$ and $w_{0,i} = -s_i$ when $s_i < 0$.  With these choices, all the weights $w_{ij}$ for $i \neq j$ are nonnegative.  A minimiser of $Q(t)$, and correspondingly a solution of~\eqref{eq:pvecmin} can be computed in $O(n^3)$ operations by computing a minimum cut in the undirected flow network $G$ assigned with these nonnegative weights~\cite{Picard_min_cuts_1974,Sankaran_solving_CDMA_mincut_1998,Ulukus_cdma_mincut_1998,Cormen2001}.  %The closest lattice point is then computed using~\eqref{eq:compnp}.

\section{Discussion}\label{sec:discussion} 

The closest lattice point problem has a number of applications, for example, channel coding and data quantisation~\cite{Conway1983VoronoiCodes,Conway1982VoronoiRegions,Conway1982FastQuantDec,Erex2004_lattice_decoding,Erez2005}.  A significant hurdle in the practical application of lattices as codes or as quantisers is that computing a closest lattice point is computationally difficult in general~\cite{micciancio_hardness_2001}.  The best known general purpose algorithms require a number of operations of order $O(2^{2n})$~\cite{Micciancio09adeterministic}.  In this paper we have focused on the class of lattices of Voronoi's first kind.  We have shown that computing a closest point in a lattice of Voronoi's first kind can be achieved in a comparatively modest number of operations of order $O(n^4)$.  Besides being of theoretical interest, the algorithm has potential for practical application.

A question of immediate interest to communications engineers is: do there exist lattices of Voronoi's first kind that produce good codes or good quantisers?  Since lattices that produce good codes and quantisers often also describe dense~\emph{sphere packings}~\cite{SPLAG}, a related question is: do there exist lattices of Voronoi's first kind that produce dense sphere packings?  These questions do not appear to have trivial answers.  The questions have heightened importance due to the algorithm described in this paper.

It is straightforward to construct an `arbitrary' lattice of Voronoi's first kind.  One approach is to construct the $n+1$ by $n+1$ symmetric matrix $Q = B^\prime B$ with elements $Q_{ij} = q_{ij} = \dotprod{b_{i}}{b_j}$ given by the Selling parameters.  Choose the off diagonal entries of $Q$ to be nonpositive with $q_{ij}=q_{ji}$ and set the diagonal elements $q_{ii} = -\sum_{j=1}^{n+1} q_{ij}$.  %We then have $Q\onebf = \zerobf$ and so $\onebf$ is in the null space of $Q$.  
The matrix $Q$ is diagonally dominant, that is, $\abs{q_{ii}} \geq \sum_{j=1}^{n+1} \abs{q_{ij}}$, and so $Q$ is positive semidefinite.  A rank deficient Cholesky decomposition~\cite{Higham90analysisof} can now be used to recover a matrix $B$ such that $B^\prime B  = Q$. The columns of $B$ are vectors of the obtuse superbasis.

A number applications such as phase unwrapping~\cite{Teunissen_GPS_1995,Hassibi_GPS_1998}, single frequency estimation~\cite{McKilliamFrequencyEstimationByPhaseUnwrapping2009}, and related signal processing problems~\cite{McKilliam2007,Clarkson2007,McKilliam2009IndentifiabliltyAliasingPolyphase,Quinn_sparse_noisy_SSP_2012} also require computing a closest lattice point.  In these applications the particular lattice arises from the signal processing problem under consideration.  If that lattice happens to be of Voronoi's first kind then our algorithm can be used.  An example where this occurs is the problem of computing the \emph{sample intrinsic mean} in circular statistics~\cite{McKilliam_mean_dir_est_sq_arc_length2010}.  In this particular problem the lattice $A_n^*$ is involved.  A fast closest point algorithm requiring only $O(n)$ operations exists for $A_n^*$~\cite{McKilliam2009CoxeterLattices,McKilliam2008b} and so the algorithm described in this paper is not needed in this particular case.  However, there many exist other signal processing problems where lattices of Voronoi's first kind arise.

A final remark is that our algorithm assumes that the obtuse superbasis is known in advance.  It is known that all lattices of dimension less than 4 are of Voronoi's first kind and an algorithm exists to recover the obtuse superbasis in this case~\cite{SPLAG}.  Lattices of dimension larger than 4 need not be of Voronoi's first kind.  An interesting question is: given a lattice, is it possible to efficiently decide whether it is of Voronoi's first kind?  A related question is: is it possible to efficiently find an obtuse superbasis if it exists?

%In some applications, for example, the detection of communications signals involving multiple anntanea~\cite{Ryan2008,Wubben_2011} the lattice basis is not known in advance.  It is then of interest whether the 

\section{Conclusion}

The paper describes an algorithm to compute a closest lattice point in a lattice of Voronoi's first kind when the obtuse superbasis is known~\cite{ConwaySloane1992_voronoi_lattice_3d_obtuse_superbases}.  The algorithm requires $O(n^4)$ operations where $n$ is the dimension of the lattice.  The algorithm iteratively computes a series of relevant vectors that converges to a closest lattice point after at most $n$ terms.   Each relevant vector in the series can be efficiently computed in $O(n^3)$ operations by computing a minimum cut in an undirected flow network.  The algorithm has potential application in communications engineering problems such as coding and quantisation.  An interesting problem for future research is to find lattices of Voronoi's first kind that produce good codes, good quantisers, or dense sphere packings~\cite{SPLAG,Conway1982VoronoiRegions}.

%\small
%\bibliography{bib}

\end{document}